\colorlet{shadecolor}{LavenderBlush3!20}
\tikzset{initial text=$$} % removes "start" from initial state}
\newtheorem{theorem}{Theorem}
\newtheorem{proposition}[theorem]{Proposition}
\newtheorem{corollary}[theorem]{Corollary}
\newenvironment{proof}{\noindent\textbf{Proof.} }{\hfill\rule{2mm}{2mm}\medskip}
\def\dd{\mathinner{\ldotp\ldotp}}   % dot dot
\newcommand{\strl}{\mathrm{\;<\hspace{-1.5mm}<\;}} % Strongly less, not prefix
\newcommand{\per}{\mathit{per}}
\newcommand{\leftchild}{\mathit{left}}
\newcommand{\rightchild}{\mathit{right}}
\newcommand{\period}{\mathit{period}}
\newcommand{\lyns}{\mathit{LynS}} % Lyndon suffix table
\newcommand{\lyn}{\mathit{Lyn}} % Lyndon table
\newcommand{\llynt}{\mathcal{L}} % Left Lyndon tree
\newcommand{\roo}{\mathit{root}}
\newcommand{\ran}{\mathit{rank}}
\newcommand{\psp}{\mathit{psp}}
\title{Left Lyndon tree construction\footnote{Revision and extension of a contribution to Prague Stringology Conference 2020 \cite{BadkobehC20}}}
\author{%
Golnaz Badkobeh%
\thanks{Goldsmiths, University of London, New Cross, London SE14 6NW, UK.
\texttt{g.badkobeh@gold.ac.uk}}
\ \&
Maxime Crochemore%
\thanks{King's College London, Informatics, 30 Aldwych, London WC2B 4BG, UK,
 and Universit\'e Gustave Eiffel, 77454 Marne-la-Vall\'ee, France.
 \texttt{Maxime.Crochemore@kcl.ac.uk}}
}
\begin{document}
%\small
\maketitle
%---------%---------%---------%---------%---------%---------%---------%--------%
\begin{abstract}
We extend the left-to-right Lyndon factorisation of a word to the left Lyndon tree construction of a Lyndon word. It yields an algorithm to sort the prefixes of a Lyndon word according to the infinite ordering defined by Dolce et al. (2019). A straightforward variant computes the left Lyndon forest of a word. All algorithms run in linear time on a general alphabet, that is, in the letter-comparison model.
\end{abstract}

%---------%---------%---------%---------%---------%---------%---------%--------%
\section{Lyndon words}\label{sect:intro}

In this article we consider algorithmic questions related to Lyndon words.
Introduced in the field of combinatorics by Lyndon (see \cite{Lyndon54}) and used in algebra, these words have shown their usefulness for designing efficient algorithms on words. The notion of Lyndon tree associated with the decomposition of a Lyndon word has, for example, been used by Bannai et al. \cite{BannaiIINTT17} to solve a conjecture of Kolpakov and Kucherov \cite{KolpakovK99} on the maximal number of runs (maximal periodicities) in words, following a result in \cite{CrochemoreIKRRW12}.

The key result in \cite{BannaiIINTT17} is that every run in a word $y$ contains as a factor a Lyndon root (according to the alphabet ordering or its inverse) that corresponds to a node of the associated Lyndon tree. Since the Lyndon tree has a linear number of nodes according to the length of $y$, browsing all its nodes leads to a linear-time algorithm in order to report all the runs occurring in $y$. However, the time complexity of this technique also depends on the time it takes to build the tree and to extend a potential run root to an actual run.

Here we consider the left Lyndon tree of a Lyndon word $y$. This binary tree has a single node if $y$ is reduced to a single letter, otherwise its structure parallels recursively the left standard factorisation (see Viennot \cite{Viennot78}) of $y$ as $uv$ where $u$ is the longest proper Lyndon prefix of $y$.

The dual notion of right Lyndon tree of a Lyndon word $y$ (based on the factorisation $y=uv$ where $v$ is the longest proper Lyndon suffix of $y$) is strongly related to the sorted list of suffixes of $y$. Indeed, Hohlweg and Reutenauer \cite{HohlwegR03} showed that the tree is the Cartesian tree built from ranks of suffixes in their lexicografically sorted list (see \cite{CrochemoreR20}). The list corresponds to the standard permutation of suffixes of the word and is the main component of its suffix array (see \cite{ManberM90} or \verb"en.wikipedia.org/wiki/Suffix_array"), one of the major data structures for text indexing.

Inspired by a result of Ufnarovskij \cite{Ufnarovskij11}, Dolce et al. \cite{DolceRR19b} showed that the left Lyndon tree is also a Cartesian tree built from the ranks of prefixes sorted according to an ordering they call the infinite order.

The main result of this article is to show that sorting prefixes of a Lyndon word according to the infinite ordering can be attained in linear time in the letter-comparison model. This produces the prefix standard permutation of the word. The algorithm is based on the Lyndon factorisation of words by Duval \cite{Duval83} and it extends naturally to build the left Lyndon forest of a word. Furthermore, recovery of a word from its prefix standard permutation can be made in linear time.

%\textbf{contrast with right Lyndon tree construction, conjecture on complexity?}
Recently, Bille et al. \cite{BilleE0GKMR20} designed an algorithm to build the right Lyndon table of a word in linear time on a general alphabet, result from which the right Lyndon tree can be deduced with the same time complexity. The reverse-engineering question on this table is discussed by Nakashima et al. in \cite{NakashimaTIBT19}.

\subsection*{Definitions}
Let $A$ be an alphabet with an ordering $<$ and $A^+$ be the set of non-empty words with the lexicographical ordering induced by $<$. The length of a word $w$ is denoted by $|w|$. We say that $uv$ (formally $(u,v)$) is a non-trivial factorisation of a word $w$ if $uv=w$ and both $u$ and $v$ are non-empty words.

A word is said to be strongly less than a word $v$, denoted by $u\strl v$, if there are words $r$, $s$ and $t$, and letters $a$ and $b$ satisfying $u=ras$, $v=rbt$ and $a<b$. And a word $u$ is smaller than a word $v$, $u<v$, if either $u\strl v$ or $u$ is a proper prefix of $v$.

In addition to the usual lexicographical ordering, the infinite ordering denoted by $\prec$ (see \cite{DolceRR19,DolceRR19b}) is defined by: $u\prec v$ if $u^\infty < v^\infty$ or both $u^\infty = v^\infty$ and $|u| > |v|$. Note that the condition $u^\infty = v^\infty$ implies that $u$ and $v$ are powers of the same word, consequence of Fine and Wilf's Periodicity lemma (see \cite[Proposition 1.3.5]{Lothaire83}).

Let $\mathcal{L}$ be the set of Lyndon words on the alphabet $A$. The next proposition defines Lyndon words that are not reduced to a single letter. Condition in item (i) is the original definition and condition in item (iii) is by Ufnarovskij \cite{Ufnarovskij11}.

\begin{proposition}\label{prop-1}
Any of the following equivalent conditions define a Lyndon word $w$, $|w|>1$:
(i) $w < vu$, for any non-trivial factorisation  $uv$ of $w$, (ii) $w < v$, for any proper non-empty suffix $v$ of $w$, (iii) $u^\infty < w^\infty$, for any proper non-empty prefix $u$ of~$w$.
\end{proposition}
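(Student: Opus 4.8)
The plan is to establish the two equivalences (i)$\,\Leftrightarrow\,$(ii) and (i)$\,\Leftrightarrow\,$(iii), each of which reduces to bookkeeping about the first position at which two words disagree, together with one periodicity-lemma fact for the infinite order. I would start with (ii)$\,\Rightarrow\,$(i): given a non-trivial factorisation $w=uv$, the suffix $v$ is proper and non-empty, so (ii) gives $w<v$; since $|w|>|v|$ this is not the prefix clause of $<$, hence $w\strl v$, say $w$ and $v$ agree on a prefix of length $i-1$ and $w$ carries the smaller letter at position $i\le|v|$. As $v$ is the length-$|v|$ prefix of $vu$, that same position $i$ witnesses $w\strl vu$, so $w<vu$.

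The substantial direction is (i)$\,\Rightarrow\,$(ii). First I would note that (i) makes $w$ primitive (otherwise the split $w=z^{k-1}\cdot z$ forces $w<z^k=w$) and, crucially, \emph{unbordered}: if a non-empty $\beta$ were simultaneously a proper prefix and a proper suffix of $w$, write $w=\delta\beta=\beta\gamma$ with $|\delta|=|\gamma|$; then (i) on the split $w=\delta\cdot\beta$ gives $\beta\gamma<\beta\delta$, i.e.\ $\gamma\strl\delta$, whereas (i) on $w=\beta\cdot\gamma$ gives $\delta\beta<\gamma\beta$, i.e.\ $\delta\strl\gamma$ --- impossible. Now let $v$ be a proper non-empty suffix, $w=uv$: (i) gives $w<vu$, which sharpens to $w\strl vu$ since $|w|=|vu|$; because $w$ is unbordered, $v$ is not a prefix of $w$, so the common prefix of $w$ and $vu$ is shorter than $v$, which places the first disagreement inside the leading occurrence of $v$ in $vu$, where it already witnesses $w\strl v$, i.e.\ $w<v$.

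For (i)$\,\Leftrightarrow\,$(iii) I would use the standard consequence of the periodicity lemma of Fine and Wilf already invoked in the text: for non-empty words $x,y$ one has $x^\infty<y^\infty$ iff $xy<yx$ (and $x^\infty=y^\infty$ iff $xy=yx$). Applying it with $(x,y)=(u,uv)$ and cancelling the common prefix $u$ from the equal-length words $uuv$ and $uvu$ yields $u^\infty<(uv)^\infty\Leftrightarrow uv<vu$, that is, $u^\infty<w^\infty\Leftrightarrow w<vu$ for every factorisation $w=uv$. Since $u\mapsto(u,u^{-1}w)$ is a bijection between the proper non-empty prefixes of $w$ and its non-trivial factorisations, condition (iii) is precisely condition (i) read through this bijection.

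I expect the bordered case in (i)$\,\Rightarrow\,$(ii) to be the main obstacle: one has to prove that Lyndon words are unbordered and then align positions carefully when carrying the first-disagreement argument from the pair $(w,vu)$ over to $(w,v)$. The only other delicate point, and it is mere bookkeeping, is to keep every comparison in the (iii) argument between words of equal length, so the prefix clause of $<$ never intervenes, and to observe that cancelling a common finite prefix from two infinite words is harmless --- the genuine Fine--Wilf content being packaged inside the cited lemma.
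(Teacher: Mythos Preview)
The paper does not prove Proposition~\ref{prop-1}: it is stated as background, with item~(i) attributed to the original definition and item~(iii) to Ufnarovskij~\cite{Ufnarovskij11}, and the text moves directly to Section~\ref{sect:lyns}. There is therefore no ``paper's own proof'' to compare against.

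Your argument is correct as a self-contained proof. The implication (ii)$\Rightarrow$(i) is immediate as you wrote it. In (i)$\Rightarrow$(ii), the unborderedness step is the right key observation; your derivation of the two contradictory inequalities $\gamma\strl\delta$ and $\delta\strl\gamma$ is sound (in the second one you are using that $\delta\beta<\gamma\beta$ with $|\delta|=|\gamma|$ forces the first mismatch to lie in the $\delta/\gamma$ block, since equality there would give $\delta\beta=\gamma\beta$). Once $w$ is unbordered, the first mismatch between $w$ and $vu$ indeed falls in the first $|v|$ positions, giving $w\strl v$. For (i)$\Leftrightarrow$(iii), invoking the Fine--Wilf consequence $x^\infty<y^\infty\iff xy<yx$ with $(x,y)=(u,w)$ and cancelling the common prefix $u$ from the equal-length words $uw=uuv$ and $wu=uvu$ is exactly the clean way to do it; this is in the spirit of the paper's later uses of the Periodicity lemma. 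One cosmetic point: you might state explicitly that the cancellation $uuv<uvu\iff uv<vu$ is legitimate because both sides have the same length and share the prefix $u$, so the comparison is purely a $\strl$-comparison.
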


%---------%---------%---------%---------%---------%---------%---------%--------%
\section{Lyndon suffix table}\label{sect:lyns}

Algorithms presented in the article strongly use the notion of Lyndon suffix table of a word, which is denoted by $\lyns$. The table $\lyns$ (more accurately  $\lyns_y$) of a word $y$ is defined, for each position $j$ on $y$, by
$$\lyns[j]=\max\{|w| \mid w \mbox{ longest Lyndon suffix of } y[0\dd j]\}.$$
For $y = \texttt{babbababbaabb}$ on the alphabet of constant letters $\{\texttt{a}, \texttt{b}, \dots\}$ ordered as usual $\texttt{a} < \texttt{b} < \cdots$, the $\lyns$ table is as follows:

\smallskip\noindent
\begin{tabular}{@{}l@{\quad}*{13}{p{9pt}}l@{}}
$j$		&0&1&2&3&4&5&6&7&8&9&10&11&12 \\
\hline
$y[j]$	&\tt b&\tt a&\tt b&\tt b&\tt a&\tt b&\tt a&\tt b&\tt b&\tt a&\tt a&\tt b&\tt b \\
$\lyns[j]$	&1&1&2&3&1&2&1&2&5&1&1&3&4
\end{tabular}

\medskip\noindent
Table $\lyns$ is the dual notion of the Lyndon table of $y$ (also called Lyndon array) $l$ in \cite{BannaiIINTT17}, $\mathcal{L}$ in \cite{FranekL19} or $\lyn$ in \cite{CrochemoreR20,CLR20cup}, used to detect maximal periodicities (runs) in words: $\lyn[j]$ is the maximal length of Lyndon prefixes of $y|j\dd |y|-1]$.

The computation of $\lyns$ is a mere extension of the algorithm for testing if a word is the prefix of a Lyndon word. It includes the key point of the factorisation algorithm in \cite{Duval83} and is recalled first as Algorithm \Algo{LyndonWordPrefix} that tests if its input is a prefix of a Lyndon word and that works online on its input. Note that it is a Lyndon word if its final period equals its length.

\medskip\noindent
\begin{algo}{LyndonWordPrefix}{y \textrm{ non-empty word of length } n}
  \SET{(\per,i)}{(1,0)}
  \DOFORI{j}{1}{n-1}
    \IF{y[j] > y[i]} \RCOM{6}{$y[i] = y[j-\per]$} \label{alg1-line3}
      \SET{(\per,i)}{(j+1,0)} \label{alg1-line4}
    \ELSEIF{y[j] < y[i]}
      \RETURN{\False}
    \ELSE
      \SET{i}{i+1 \bmod \per}
    \FI
  \OD
  \RETURN{\True}
\end{algo}

\noindent
{\begin{picture}(300,35)(0,0)
\put(  0,18){$y$}
\put( 10,15){\framebox(300,10){}}
\put( 10,15){\framebox(100,10){$x$}}
\put(110,15){\framebox(100,10){$x$}}
\put(210,15){\framebox(30,10){$z$}}
\put( 10,29){\makebox(7,7){$0$}}
\put( 40,29){\makebox(10,7){$i$}}
\put(240,28){\makebox(10,7){$j$}}
\put( 10, 0){\framebox(30,10){$z$}}
\put(110,11){\vector(1,0){100}}
\put(210,11){\vector(-1,0){100}}
\put(110, 0){\makebox(100,7){$\per$}} %{$\period(y[0\dd j-1])$}}
\put( 10,10){\dashbox{1}(0,5)}
\put( 40,10){\dashbox{1}(0,15)}
\end{picture}}

\medskip
The key feature of the method stands in lines \ref{alg1-line3}-\ref{alg1-line4} of the algorithm and is illustrated on the above picture. If $y[j]>y[i]=y[j-\per]$, not only the periodicity $\per$ of $y[0\dd j-1]$ breaks but $y[0\dd j]$ is a Lyndon word with period $j+1$. This results from the following known properties (see \cite{Lothaire83}).

\begin{proposition}\label{prop-2}
(i) Let $z$ be a word and $a$ a letter for which $za$ is a prefix of a Lyndon word and let $b$ be a letter with $a<b$. Then $zb$ is a Lyndon word.\\
(ii) Let $u$ and $v$ be two Lyndon words with $u<v$. Then $uv$ is a Lyndon word.
\end{proposition}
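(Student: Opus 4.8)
The plan is to prove the two classical facts in Proposition~\ref{prop-2} using the suffix characterisation of Lyndon words from Proposition~\ref{prop-1}(ii), namely that a word $w$ with $|w|>1$ is Lyndon iff $w<v$ for every proper non-empty suffix $v$ of $w$. For part~(i), I would argue as follows. Since $za$ is a prefix of a Lyndon word, it is in particular \emph{border-free} up to the relevant comparison: more precisely, every proper non-empty suffix $s$ of $za$ satisfies $za < s'$ where $s'$ is the corresponding suffix of that larger Lyndon word, which forces $za \le s$ (and in fact $za$ compares favourably against its own suffixes in the $\strl$ sense whenever the comparison is decided before the last letter). The suffixes of $zb$ are of two kinds: (1) $zb$ itself is irrelevant, and (2) a proper non-empty suffix $t b$ where $t$ is the matching proper (possibly empty) suffix of $z$. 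If $t$ is empty, the suffix is the single letter $b$, and since $a<b$ and $za$ is a prefix of a Lyndon word over an alphabet where the first letter of $za$ is $\le a$, we get $zb \strl b$ (the first letter of $z$ is strictly smaller than $b$, or $z$ is empty and then $zb=b$ has no proper suffix). If $t$ is non-empty, then $ta$ is a proper non-empty suffix of $za$, so from $za$ being a prefix of a Lyndon word we know $za < ta$; I claim this comparison is already decided strictly (i.e.\ $za \strl ta$) because $ta$ cannot be a prefix of $za$ unless $t$ is empty (a non-empty proper suffix of a border-free word is not a prefix of it). Replacing the trailing $a$ by $b$ on both sides preserves the strict inequality, giving $zb \strl tb$. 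Hence $zb < s$ for every proper non-empty suffix $s$, so $zb$ is Lyndon.

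For part~(ii), given Lyndon words $u<v$, I want to show $uv$ is Lyndon, i.e.\ $uv < s$ for every proper non-empty suffix $s$ of $uv$. There are two cases. If $s$ is a suffix of $v$ (including $s=v$): since $v$ is Lyndon, $v \le s$ with equality only when $s=v$; and $uv < v$ because $u<v$ implies $u$ is not a prefix of $v$ that could make $uv$ larger — more carefully, $u<v$ means either $u\strl v$, in which case $uv \strl v$ directly, or $u$ is a proper prefix of $v$, in which case writing $v=uw$ we have $v$ Lyndon hence $v<w$ (suffix characterisation), and one shows $uv=uuw < uw = v$ using that $u<w$ lifted through the common prefix $u$. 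Combining, $uv < v \le s$. If instead $s=u'v$ where $u'$ is a proper non-empty suffix of $u$: since $u$ is Lyndon, $u < u'$, hence $u \strl u'$ or $u$ is a proper prefix of $u'$; but $|u|\le|u'|$ is impossible as $u'$ is a \emph{proper} suffix of $u$, so we must have $u\strl u'$, and this strict comparison is inherited by the concatenations: $uv \strl u'v = s$. In all cases $uv$ is strictly smaller than its proper non-empty suffixes, so it is Lyndon.

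The main obstacle I anticipate is bookkeeping the distinction between the two components of the ordering $<$ — the ``strongly less than'' relation $\strl$, decided at a mismatch letter, versus the ``proper prefix'' case — and making sure that whenever I lift an inequality $x<y$ to $xc<yc'$ or $zx<zy$ by concatenation, the case that actually occurs is the $\strl$ case rather than the prefix case, since only $\strl$ is stable under appending material on the right. The key structural facts that make this work are: (a) a Lyndon word, and more generally a prefix of a Lyndon word, is border-free, so none of its proper non-empty suffixes can be a prefix of it, which rules out the degenerate prefix case exactly when it would be harmful; and (b) $\strl$ is preserved under both left concatenation (appending a common prefix) and right concatenation (appending anything after the already-decided mismatch). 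Once these are isolated as small observations, the rest is routine case analysis on the suffix types of $zb$ in~(i) and of $uv$ in~(ii).
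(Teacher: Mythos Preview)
The paper does not prove Proposition~\ref{prop-2}; it is stated as a known result with a reference to \cite{Lothaire83}. So there is no ``paper's proof'' to compare against, and I evaluate your argument on its own merits.

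Your treatment of part~(ii) is correct. The case split on whether the suffix lies inside $v$ or straddles $u$ is the standard one, and you handle the bookkeeping between $\strl$ and the prefix case properly.

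Part~(i), however, has a genuine gap. You assert that ``a Lyndon word, and more generally a prefix of a Lyndon word, is border-free'', and you use this to conclude that the comparison $za<ta$ must be of the $\strl$ kind. That assertion is false for prefixes: $aaa$ is a prefix of the Lyndon word $aaab$, yet it has the non-empty border $aa$. Concretely, take $z=aa$, $a=\texttt{a}$, $t=a$. Then $ta=aa$ \emph{is} a prefix of $za=aaa$, and in fact $ta<za$, so your intermediate inequality $za<ta$ is simply wrong here. (The final conclusion $zb<tb$ still holds in this example, e.g.\ $aab<ab$, but your route to it does not.)

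The fix is to compare $zb$ and $tb$ directly against the Lyndon word $w$ that $za$ is a prefix of, rather than against $za$. Write $t=z[j\dd |z|-1]$ with $j>0$. Either the first mismatch between $z$ and $t$ occurs at some position $i<|t|$, in which case the same mismatch occurs between $w$ and its suffix $w[j\dd]$, forcing $z[i]<z[j{+}i]$ and hence $zb\strl tb$; or $z[0\dd |t|-1]=t$, and then comparing $w$ with $w[j\dd]$ at position $|t|$ gives $z[|t|]\le w[j+|t|]=w[|z|]=a<b$, so again $zb\strl tb$. This second case is exactly the bordered situation you tried to exclude, and it requires its own (short) argument.
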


Algorithm \Algo{LyndonSuffixT} below computes the Lyndon suffix table of a Lyndon word. (It is extended in Section~\ref{sect:forest} to compute the same table of a non-empty word.) The algorithm results from a minor modification of Algorithm \Algo{LyndonWordPrefix} and can be easily enhanced to compute also the period of all non-empty prefixes of the input.

\medskip\noindent
\begin{algo}{LyndonSuffixT}{y \textrm{ Lyndon word of length } n}
  \SET{\lyns[0]}{1}
  \SET{(\per,i)}{(1,0)}
  \DOFORI{j}{1}{n-1}
    \IF{y[j] \neq y[i]} \RCOM{6}{$y[j] > y[i] = y[j-\per]$}
      \SET{\lyns[j]}{j+1}
      \SET{(\per,i)}{(j+1,0)}
    \ELSE
      \SET{\lyns[j]}{\lyns[i]}
      \SET{i}{i+1 \bmod \per}
    \FI
  \OD
  \RETURN{\lyns}
\end{algo}

\begin{proposition}\label{prop-3}
Algorithm \Algo{LyndonSuffixT} computes the Lyndon suffix table of a Lyndon word of length $n$ in time $O(n)$ in the letter-comparison model.
\end{proposition}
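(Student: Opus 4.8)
The plan is to prove correctness first and then the time bound. For correctness I would argue by induction on $j$ that, after processing position $j$, the variable $\per$ holds the period of $y[0\dd j]$, that $i$ equals $j+1-\per$ taken modulo the period in the appropriate sense (more precisely $i = j \bmod \per$ as maintained by the algorithm, so that $y[i] = y[j-\per]$ whenever $j \ge \per$), and that $\lyns[j]$ is the length of the longest Lyndon suffix of $y[0\dd j]$. The base case $j=0$ is immediate. For the inductive step I would split on the test $y[j] \ne y[i]$ versus $y[j] = y[i]$.

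In the first branch, since $y[0\dd j-1]$ is a prefix of a Lyndon word (being a prefix of the Lyndon word $y$), we have $y[j] \ge y[j-\per] = y[i]$; the inequality $y[j]\neq y[i]$ forces $y[j] > y[i]$, and then Proposition~\ref{prop-2}(i) — applied with $z = y[0\dd j-1]$ truncated appropriately, or rather directly via the periodicity structure as in the picture — shows $y[0\dd j]$ is a Lyndon word of period $j+1$. A Lyndon word has itself as its unique longest Lyndon suffix (any proper suffix is strictly larger by Proposition~\ref{prop-1}(ii), hence cannot be Lyndon and also be a suffix in a way that... — more simply, the whole word is Lyndon so the longest Lyndon suffix has length $j+1$), so $\lyns[j] = j+1$ is correct, and resetting $(\per,i) \leftarrow (j+1,0)$ maintains the invariant. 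In the second branch $y[j] = y[i] = y[j-\per]$, so $y[0\dd j]$ still has period $\per$; the longest Lyndon suffix of $y[0\dd j]$ is the longest Lyndon suffix of $y[0\dd i]$ shifted by $j-i$, because the period structure means $y[0\dd j]$ and $y[0\dd i]$ agree on their suffixes up to length $i+1$, and any Lyndon suffix longer than that would span a full period and violate Proposition~\ref{prop-1}(ii)/(iii). Hence $\lyns[j] = \lyns[i]$ is correct, and the update $i \leftarrow i+1 \bmod \per$ preserves $y[i] = y[j+1-\per]$ (equivalently $y[i]=y[(j+1)-\per]$) for the next iteration.

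For the complexity, each iteration of the for-loop performs one letter comparison and $O(1)$ other work, so the total is $O(n)$ letter comparisons and $O(n)$ time overall; this is already in the letter-comparison model since no other operations on letters are used. I would state this at the end.

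The step I expect to be the main obstacle is justifying, in the $y[j]=y[i]$ branch, that the longest Lyndon suffix of $y[0\dd j]$ is exactly the longest Lyndon suffix of $y[0\dd i]$ (up to the shift), i.e.\ that no Lyndon suffix of $y[0\dd j]$ can be longer than $i+1$. This needs a short argument combining the periodicity of $y[0\dd j]$ with Proposition~\ref{prop-1}: a suffix $w$ of $y[0\dd j]$ with $|w| > i+1 = j+1-\per$ contains a full period, so $w$ has a proper prefix $p$ with $w = p\,y[\cdots]$ where $p$ is a prefix-repetition forcing $w^{\infty} = p^{\infty}$ with $|p| < |w|$, contradicting condition (iii) that $p^{\infty} < w^{\infty}$ strictly; hence $w$ is not Lyndon. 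The remaining parts (the reset branch and the bookkeeping on $i$) are routine once the invariant is set up correctly, so most of the writing effort goes into stating the invariant precisely and checking this one periodicity claim.
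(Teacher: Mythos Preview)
The paper does not prove this proposition; it is stated as a direct consequence of the discussion of Algorithm \Algo{LyndonWordPrefix} and Proposition~\ref{prop-2}. Your plan --- an invariant on $(\per,i)$ plus induction on $j$ for correctness, and the obvious per-iteration cost for the time bound --- is the natural way to turn that discussion into a proof, and both the if-branch and the complexity analysis are correct as you describe them.

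The gap is exactly where you flagged it, in the else branch, and it is a real gap rather than just a matter of writing. Two concrete errors. First, the identity $i+1 = j+1-\per$ is false once $j \geq 2\per$; the invariant gives $i = j \bmod \per$, not $i = j-\per$ (e.g.\ at $j=10$, $\per=5$ in the paper's running example one has $i=0$, not $5$). Second, even when a suffix $w$ inherits period $\per$ from $y[0\dd j]$, it does not follow that $w^\infty = p^\infty$ for the length-$\per$ prefix $p$ of $w$ (take $w=\texttt{aba}$, $p=\texttt{ab}$), so your appeal to Proposition~\ref{prop-1}(iii) fails as written. A clean fix bypasses the periodicity gymnastics: add to the invariant that $x = y[0\dd \per-1]$ is a Lyndon word (immediate, since $\per$ is only updated in the if-branch where $y[0\dd j]$ is shown to be Lyndon), and write $y[0\dd j] = x^e z$ with $e\geq 1$ and $z$ a proper, possibly empty, prefix of $x$. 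Since the first Lyndon factor of $z$ is a prefix of $z$ and hence $< x$, the sequence $x,\dots,x$ followed by the Lyndon factorisation of $z$ is non-increasing and is therefore \emph{the} Lyndon factorisation of $y[0\dd j]$; its last factor --- which is always the longest Lyndon suffix --- is thus the longest Lyndon suffix of $z = y[0\dd i]$, giving $\lyns[j]=\lyns[i]$. The boundary case $z$ empty corresponds to $i=\per-1$, where both sides equal $\per$.
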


%\begin{example}
Given $y=\texttt{ababbababbabac}$, the corresponding $\lyns$ table and period table are as follows:

\medskip\noindent
\begin{tabular}{@{}l@{\quad}*{14}{p{9pt}}l@{}}
$j$		&0&1&2&3&4&5&6&7&8&9&10&11&12&13 \\
\hline
$y[j]$	&\tt a&\tt b&\tt a&\tt b&\tt b&\tt a&\tt b&\tt a&\tt b&\tt b&\tt a&\tt b&\tt a&\tt c\\
$\lyns[j]$	&1&2&1&2&5&1&2&1&2&5&1&2&1&14\\
$\period[j]$ &1&2&2&2&5&5&5&5&5&5&5&5&5&14
\end{tabular}
%\end{example}

%---------%---------%---------%---------%---------%---------%---------%--------%
\section{Left Lyndon tree construction}\label{sect:llyn}

The left Lyndon tree $\llynt(y)$ of a Lyndon word $y$ represents recursively the left standard factorisation of $y$. It is a binary tree whose leaves are positions on the word and internal nodes correspond to concatenations of two Lyndon factors of the word, and as such can be viewed as interpositions. Precisely, $\llynt(y)=(p)$ if $|y|=1$ else it is $(p,\llynt(u),\llynt(v))$ where the node $p\in\{|y|\dd 2|y|-2\}$ is an integer and $uv$ is the left standard factorisation of $y$, that is, $u$ is the longest proper Lyndon prefix of $y$ ($v$ is then a Lyndon word).

In the next algorithm, subtrees of $\llynt(y)$ are handled from positions on $y$ as follows. The subtree associated with position $j$ is $\llynt(y[i\dd j])$ where $j-i+1=\lyns[j]$ and its root is $\roo[j]$. Thus, position $j$ on $y$ is the rightmost leaf of the subtree and $\lyns[j]$ is its tree width. Besides, the left child of an internal node $q$ is $\leftchild(q)$ and its right child is $\rightchild(q)$.

It is known that $y$, as a Lyndon word with $|y|>1$, is of the form $x^kzb$ where $x$ is a Lyndon word of length $\per=\period(x^kz)$, $k>0$, $z$ is a proper prefix of $x$ and $b$ is a letter greater than letter $a$ following $z$ in $x$ ($za$ is a prefix of $x$) \cite{Duval83}.

The construction of $\llynt(y)$ is achieved with the help of the table $\lyns$ of $y$. It is done by processing $y$ from left to right building first $\llynt(x)$ and reproducing that tree or part of it up to $z$. The picture displays the subtrees built for the word $(\texttt{ababb})^2\texttt{aba}$.

\tikzstyle{intern} = [circle, draw, thin]
\tikzstyle{leaf} = [draw,very thin]
\bigskip%\noindent
\begin{tikzpicture}[scale=0.37,node distance=1.5cm, auto,>=latex', thick]
%\draw[step=1cm,very thin](-1,-1) grid (27,9);
\path[->]
 node[leaf] (p0) at (0,1.5) {$0$}
 node[leaf] (p1) at (2,1.5) {$1$}
 node[leaf] (p2) at (4,1.5) {$2$}
 node[leaf] (p3) at (6,1.5) {$3$}
 node[leaf] (p4) at (8,1.5) {$4$}
 node[leaf] (p5) at (10,1.5) {$5$}
 node[leaf] (p6) at (12,1.5) {$6$}
 node[leaf] (p7) at (14,1.5) {$7$}
 node[leaf] (p8) at (16,1.5) {$8$}
 node[leaf] (p9) at (18,1.5) {$9$}
 node[leaf] (p10) at (20,1.5) {$10$}
 node[leaf] (p11) at (22,1.5) {$11$}
 node[leaf] (p12) at (24,1.5) {$12$}
 node[leaf] (p13) at (26,1.5) {$13$}
;
\path[->]
 node (q0) at (0,0) {\tt a}
 node (q1) at (2,0) {\tt b}
 node (q2) at (4,0) {\tt a}
 node (q3) at (6,0) {\tt b}
 node (q4) at (8,0) {\tt b}
 node (q5) at (10,0) {\tt a}
 node (q6) at (12,0) {\tt b}
 node (q7) at (14,0) {\tt a}
 node (q8) at (16,0) {\tt b}
 node (q9) at (18,0) {\tt b}
 node (q10) at (20,0) {\tt a}
 node (q11) at (22,0) {\tt b}
 node (q12) at (24,0) {\tt a}
 node (q13) at (26,0) {\tt c}
;
\path[-]
 node[intern] (p14) at (1,3) {}
 node[intern] (p15) at (5,3) {}
 node[intern] (p16) at (7,4) {}
 node[intern] (p17) at (3,5) {}
 node[intern] (p18) at (11,3) {}
 node[intern] (p19) at (15,3) {}
 node[intern] (p20) at (17,4) {}
 node[intern] (p21) at (13,5) {}
 node[intern] (p22) at (21,3) {}
 (p14) edge node {} (p0)
 (p14) edge node {} (p1)
 (p15) edge node {} (p2)
 (p15) edge node {} (p3)
 (p16) edge node {} (p4)
 (p16) edge node {} (p15)
 (p17) edge node {} (p14)
 (p17) edge node {} (p16)
 (p18) edge node {} (p5)
 (p18) edge node {} (p6)
 (p19) edge node {} (p7)
 (p19) edge node {} (p8)
 (p20) edge node {} (p9)
 (p20) edge node {} (p19)
 (p21) edge node {} (p18)
 (p21) edge node {} (p20)
 (p22) edge node {} (p10)
 (p22) edge node {} (p11)
;
\end{tikzpicture}

\medskip
The main step of the procedure, in addition to computing $\lyns$ identically as in  Algorithm \Algo{LyndonSuffixT} above, is to aggregate partial Lyndon trees when processing the last letter $b$ of $y$, which creates the final tree as a bundle of all subtrees. In fact, this step is also carried out when dealing with $x^kz$ at each position $j$ for which $\lyns[j]>1$. In order to aggregate the subtrees, the second property of Proposition~\ref{prop-2}  is applied iteratively, processing the trees from right to left. An explicit instruction of this step is designed at lines~\ref{alg3-line10}-\ref{alg3-line15} in Algorithm \Algo{LeftLyndonTree} below.

The process of bundling can be viewed as a translation into the tree structure of the proof of the key feature of Algorithm \Algo{LyndonWordPrefix} stated in item (ii) of Proposition~\ref{prop-2}. Even so the latter algorithm deals with this process in constant time using item (i) of the proposition, the iteration of instructions during the bundling does not affect the asymptotic running time of the present algorithm.

\medskip\noindent
\begin{algo}{LeftLyndonTree}{y \textrm{ Lyndon word of length } n}
  \SET{(\lyns[0],\roo[0])}{(1,0)}
  \SET{(\per,i)}{(1,0)}
  \DOFORI{j}{1}{n-1}
    \SET{\roo[j]}{j}
    \IF{y[j] \neq y[i]} \RCOM{6}{$y[j] > y[i] = y[j-\per]$}
      \SET{\lyns[j]}{j+1}
      \SET{(\per,i)}{(j+1,0)}
    \ELSE
      \SET{\lyns[j]}{\lyns[i]} \label{alg3-line8}
      \SET{i}{i+1 \bmod \per}
    \FI
%    \COM{Bundle}
    \SET{(\ell,k)}{(1,j-1)} \label{alg3-line10}
    \DOWHILE{\ell < \lyns[j]}
      \SET{q}{\mbox{new node} \geq n}
      \SET{(\leftchild[q],\rightchild[q])}{(\roo[k],\roo[j])}
      \SET{\roo[j]}{q}
      \SET{(\ell,k)}{(\ell+\lyns[k],k-\lyns[k])} \label{alg3-line15}
    \OD
  \OD
  \RETURN{\roo[n-1]}
\end{algo}

The picture below shows red nodes and links created by the final round of instructions at lines~\ref{alg3-line10}-\ref{alg3-line15} in Algorithm \Algo{LeftLyndonTree}.

\medskip%\noindent
%\tikzstyle{intern} = [circle, draw, thin]
%\tikzstyle{leaf} = [draw, thin]
\begin{tikzpicture}[scale=0.373,node distance=1.5cm, auto,>=latex', thick]
%\draw[step=1cm,very thin](-1,-1) grid (27,13);
\path[->]
 node[leaf] (p0) at (0,1.5) {$0$}
 node[leaf] (p1) at (2,1.5) {$1$}
 node[leaf] (p2) at (4,1.5) {$2$}
 node[leaf] (p3) at (6,1.5) {$3$}
 node[leaf] (p4) at (8,1.5) {$4$}
 node[leaf] (p5) at (10,1.5) {$5$}
 node[leaf] (p6) at (12,1.5) {$6$}
 node[leaf] (p7) at (14,1.5) {$7$}
 node[leaf] (p8) at (16,1.5) {$8$}
 node[leaf] (p9) at (18,1.5) {$9$}
 node[leaf] (p10) at (20,1.5) {$10$}
 node[leaf] (p11) at (22,1.5) {$11$}
 node[leaf] (p12) at (24,1.5) {$12$}
 node[leaf] (p13) at (26,1.5) {$13$}
;
\path[->]
 node (q0) at (0,0) {\tt a}
 node (q1) at (2,0) {\tt b}
 node (q2) at (4,0) {\tt a}
 node (q3) at (6,0) {\tt b}
 node (q4) at (8,0) {\tt b}
 node (q5) at (10,0) {\tt a}
 node (q6) at (12,0) {\tt b}
 node (q7) at (14,0) {\tt a}
 node (q8) at (16,0) {\tt b}
 node (q9) at (18,0) {\tt b}
 node (q10) at (20,0) {\tt a}
 node (q11) at (22,0) {\tt b}
 node (q12) at (24,0) {\tt a}
 node (q13) at (26,0) {\tt c}
;
\path[-]
 node[intern] (p14) at (1,3) {}
 node[intern] (p15) at (5,3) {}
 node[intern] (p16) at (7,4) {}
 node[intern] (p17) at (3,5) {}
 node[intern] (p18) at (11,3) {}
 node[intern] (p19) at (15,3) {}
 node[intern] (p20) at (17,4) {}
 node[intern] (p21) at (13,5) {}
 node[intern] (p22) at (21,3) {}
 node[intern,color=red] (p23) at (25,3) {}
 node[intern,color=red] (p24) at (23,5) {}
 node[intern,color=red] (p25) at (19,7) {}
 node[intern,color=red] (p26) at ( 9,8) {}
 (p14) edge node {} (p0)
 (p14) edge node {} (p1)
 (p15) edge node {} (p2)
 (p15) edge node {} (p3)
 (p16) edge node {} (p4)
 (p16) edge node {} (p15)
 (p17) edge node {} (p14)
 (p17) edge node {} (p16)
 (p18) edge node {} (p5)
 (p18) edge node {} (p6)
 (p19) edge node {} (p7)
 (p19) edge node {} (p8)
 (p20) edge node {} (p9)
 (p20) edge node {} (p19)
 (p21) edge node {} (p18)
 (p21) edge node {} (p20)
 (p22) edge node {} (p10)
 (p22) edge node {} (p11)
 (p23) edge[color=red] node {} (p12)
 (p23) edge[color=red] node {} (p13)
 (p24) edge[color=red] node {} (p22)
 (p24) edge[color=red] node {} (p23)
 (p25) edge[color=red] node {} (p21)
 (p25) edge[color=red] node {} (p24)
 (p26) edge[color=red] node {} (p17)
 (p26) edge[color=red] node {} (p25)
;
\end{tikzpicture}

\begin{proposition}\label{prop-4}
Algorithm \Algo{LeftLyndonTree} builds the left Lyndon tree of a Lyndon word of length $n$ in time $O(n)$ in the letter-comparison model.
\end{proposition}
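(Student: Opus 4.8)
I would prove correctness by induction on the positions of $y$ and bound the running time by an amortised count of the created nodes. The updates of $\lyns$, $\per$ and $i$ coincide with those of Algorithm \Algo{LyndonSuffixT}, so by Proposition~\ref{prop-3} the value $\lyns[j]$ is the length of the longest Lyndon suffix $s_j=y[j-\lyns[j]+1\dd j]$ of $y[0\dd j]$ (and $\period$ could be maintained for free). The invariant to establish is: after the iteration for position $j$, the node $\roo[m]$ is the root of $\llynt(s_m)$ for every $m\le j$. As $y$ is Lyndon, $s_{n-1}=y$, so with $m=n-1$ this shows the algorithm returns $\roo[n-1]=\llynt(y)$.

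\noindent Granting the value $\lyns[j]$, the inductive step is an analysis of the bundling loop at lines~\ref{alg3-line10}--\ref{alg3-line15}. By the primary hypothesis the positions $j-1,\ j-1-\lyns[j-1],\ \dots$ visited by the loop carry the roots $\llynt(v_t),\llynt(v_{t-1}),\dots$ of the blocks of the right-to-left iterated longest-Lyndon-suffix decomposition $y[0\dd j-1]=v_1v_2\cdots v_t$, which is classically the Lyndon factorisation of $y[0\dd j-1]$, so $v_1\ge v_2\ge\cdots\ge v_t$. Appending $y[j]$ changes this, by the standard online rule, by absorbing a suffix $v_p,v_{p+1},\dots,v_t$ of the block sequence together with $y[j]$ into one new last block $w=v_pv_{p+1}\cdots v_t\,y[j]$, characterised by $v_i<v_{i+1}\cdots v_t\,y[j]$ for $i\ge p$ and $v_{p-1}\ge v_p\cdots v_t\,y[j]$; since $\lyns[j]=|w|$, the test $\ell<\lyns[j]$ stops the loop exactly at the boundary before $v_p$. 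Combining from the right under fresh internal nodes, the loop builds the left comb whose rightmost branch ends in the leaf $j$ and carries, as left subtrees from the root downwards, $\llynt(v_p),\llynt(v_{p+1}),\dots,\llynt(v_t)$. It remains to verify: \emph{(a)}~each word $v_iv_{i+1}\cdots v_t\,y[j]$ is Lyndon, which follows from the above inequalities by iterating Proposition~\ref{prop-2}(ii); \emph{(b)}~in that word $v_i$ is the \emph{longest} proper Lyndon prefix, so that the fresh node placed above $\llynt(v_i)$ and $\llynt(v_{i+1}\cdots v_t\,y[j])$ is exactly its left Lyndon tree; and \emph{(c)}~the loop walks by whole blocks, i.e.\ $\ell$ meets $\lyns[j]$ exactly, which given (a)--(b) is the classical fact, underlying Duval's algorithm, that the longest Lyndon suffix of a word is the last block of its Lyndon factorisation.

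\noindent Claim~(b) is the substance of the proof and where I expect the real difficulty. I would argue it by a nested induction organised along the algorithm's conditional. The primary induction is on $j$: when $y[j]=y[i]$ the prefix $y[0\dd j]$ has period $\per$ and $\lyns[j]=\lyns[i]$, so the whole walk at $j$ is the positional shift by $j-i$ of the walk performed at the smaller index $i$, and $\roo[j]$ becomes the image of $\roo[i]=\llynt(s_i)$, i.e.\ $\llynt(s_j)$, straight from the primary hypothesis. When $y[j]\neq y[i]$, every block is absorbed and $w=y[0\dd j]=x^{k}(zb)$ is Duval's normal form, with $zb$ Lyndon by Proposition~\ref{prop-2}(i) and $x\strl zb$; a sub-induction on $|x|$, recursing into $zb$, shows that the nested left standard factorisations of $w$ strip off one copy of $x$ at a time down to $\llynt(zb)$, and this ultimately rests on the facts that $x^{q}$ is not Lyndon for $q\ge2$ and that $x$ followed by any nonempty proper prefix of $x$ is never Lyndon. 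In both cases the decisive point is that no proper Lyndon prefix of any of the nested words $v_iv_{i+1}\cdots v_t\,y[j]$ is longer than $v_i$: writing such a word as $v_iW'$ with $W'$ Lyndon by (a), a longer Lyndon prefix would be $v_is$ with $s$ a nonempty proper prefix of $W'$, whence $v_is<s$, which contradicts Proposition~\ref{prop-1}(iii) together with the border-freeness of Lyndon words after a short case analysis on the overlap of $v_i$ and $s$. The degenerate subcases ($z$ empty, $k=1$, $i=0$) are routine.

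\noindent For the running time, the $\lyns$/$\per$/$i$ bookkeeping costs $O(1)$ per position exactly as in Algorithm \Algo{LyndonSuffixT}, hence $O(n)$ overall, so it is enough to bound the total number of bundling iterations, since each one creates a single fresh internal node. Let $m_j$ be the number of blocks of the Lyndon factorisation of $y[0\dd j]$, equivalently the number of subtree roots still live after step $j$; step $j$ adds the leaf $j$ as a new root and then performs $c_j$ merges, each lowering the live-root count by one, so $m_j=m_{j-1}+1-c_j$. Summing over $j=1,\dots,n-1$ telescopes to $\sum_j c_j=(n-1)+m_0-m_{n-1}=n-1$, using $m_0=1$ and $m_{n-1}=1$ (the latter because $y$ is Lyndon), in agreement with the $n-1$ internal nodes of $\llynt(y)$. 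Each iteration is $O(1)$, so the algorithm runs in $O(n)$ in the letter-comparison model.
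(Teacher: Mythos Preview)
Your running-time argument coincides with the paper's: each iteration of the inner while loop creates exactly one internal node, the tree has exactly $n-1$ internal nodes, and everything else is $O(1)$ per position; your telescoping sum with the block counts $m_j$ is just another way of arriving at that same $n-1$. In fact the paper's proof of this proposition consists \emph{solely} of that amortised running-time bound---correctness is not argued in the proof at all, only alluded to in the informal discussion preceding the algorithm (the remark that bundling ``can be viewed as a translation into the tree structure'' of Proposition~\ref{prop-2}(ii)). So on the part both cover, your approach is the same; beyond that you supply a correctness argument the paper omits.

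One remark on that correctness sketch: claim~(b) admits a shorter justification than the overlap case analysis you propose. Since $v_i,v_{i+1},\dots,v_t$ is the Lyndon factorisation of $v_iv_{i+1}\cdots v_t$, the first factor $v_i$ is already its longest Lyndon prefix (the standard property underlying Duval's algorithm). Every \emph{proper} prefix of $v_iv_{i+1}\cdots v_t\,y[j]$ has length at most $|v_i\cdots v_t|$ and is therefore a prefix of $v_i\cdots v_t$; hence no proper Lyndon prefix can exceed $|v_i|$, and $v_i$ itself is one. This dispatches (b) uniformly and makes the separate nested induction you outline for the case $y[j]\neq y[i]$ unnecessary.
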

\begin{proof}
All instructions inside the \For\ loop execute in constant time except the \While\ loop. In addition, since each execution of instructions in the \While\ loop takes constant time and leads to the creation of an internal node of the final tree twinned with the fact that there are eactly $n-1$ such nodes, the total (amortised) running time is $O(n)$.
\end{proof}

%---------%---------%---------%---------%---------%---------%---------%--------%
\section{Sorting prefixes}\label{sect:sort}

This section shows that Algorithm \Algo{LeftLyndonTree} can be adapted to sort the prefixes of a Lyndon word according to the infinite ordering $\prec$. This is a consequence of Theorem~\ref{theo-6} below.

For the Lyndon word $y$, an internal node $p$ of the left Lyndon tree $\llynt(y)$ is the root of a Lyndon subtree associated with a Lyndon factor $w$ of $y$. This factor is obtained by concatenating two consecutive occurrences of Lyndon factors $u$ and $v$. If the concerned occurrence of $u$ ends at position $j$ on $y$, node $p$ is identified with the prefix of $y$ ending at position $j$. The correspondence between internal nodes of the tree and proper non-empty prefixes of $y$ is one-to-one (see picture below).

Labelling internal nodes with the $\prec$-ranks of their associated prefixes transforms the tree into a heap, i.e. ranks are increasing from leaves to the root. The relation between the infinite ordering and left Lyndon trees is established by the next result \cite{DolceRR19b}.

\begin{theorem}[Dolce, Restivo, Reutenauer, 2019]\label{theo-dolce}
For a Lyndon word $y$, the tree of internal nodes of the left Lyndon tree $\llynt(y)$ in which nodes are labelled by the ranks of proper non-empty prefixes of $y$ sorted according to the infinite ordering is the Cartesian tree of prefix ranks.
\end{theorem}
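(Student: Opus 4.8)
The plan is to prove this by induction on $|y|$, running in parallel with the recursive definitions of the left Lyndon tree and of the Cartesian tree. Recall that a binary tree equipped with a labelling of its internal nodes is the Cartesian tree of a sequence precisely when (i) an in-order traversal of the internal nodes returns the sequence entries in index order, and (ii) the labels satisfy the heap property. Identify each internal node $p$ of $\llynt(y)$ with the proper non-empty prefix $w_p = y[0\dd |w_p|-1]$ as in the statement, and set $P_j = y[0\dd j]$. The internal node spanning the Lyndon factor $y[i\dd j']$ whose left standard factorisation cuts after position $j$ lies, in in-order, between leaves $j$ and $j+1$; hence it is the $j$-th internal node visited (counting from $0$) and is identified with $P_j$. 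This matches the identification induced by the recursion, since with $y = uv$ the left standard factorisation the right subtree $\llynt(v)$ contributes exactly the prefixes obtained by prepending $u$ to the proper non-empty prefixes of $v$. So (i) is automatic, and everything reduces to establishing the heap property: for every internal node $p$ and every internal child $c$ of $p$, $w_c \prec w_p$, so that $\prec$-ranks increase from the leaves to the root.

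By transitivity it suffices to check parent--child pairs, and there the recursion splits the work into two facts. With $y = uv$ the left standard factorisation ($u$ the longest proper Lyndon prefix, $v$ Lyndon), the root of $\llynt(y)$ is identified with $u$, its left subtree is $\llynt(u)$ and its right subtree is $\llynt(v)$. The two facts are \textbf{(A)}: $u$ is $\prec$-maximal among all proper non-empty prefixes of $y$; and \textbf{(B)}: the map $w \mapsto uw$ is $\prec$-order-preserving on the proper non-empty prefixes of $v$, that is, $uw \prec uw' \iff w \prec w'$.

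Granting (A) and (B), the induction closes. Fact (A) yields the heap inequality at the root, against its two children and in fact against the whole tree below it. For the left subtree, the proper non-empty prefixes of $u$ are literally $P_0, \dots, P_{|u|-2}$, so their $\prec$-order inside $\llynt(u)$ and inside $\llynt(y)$ agree, and the induction hypothesis for $u$ supplies the remaining heap inequalities there. For the right subtree, its internal nodes, read inside $\llynt(y)$, carry the prefixes $uw$ with $w$ a proper non-empty prefix of $v$; by (B) the $\prec$-order of these prefixes is order-isomorphic to the $\prec$-order of the corresponding $w$'s, and since a Cartesian tree depends only on the relative order of the labels, the induction hypothesis for $v$ gives that $\llynt(v)$, carrying the ranks it inherits from $\llynt(y)$, is again a Cartesian tree; its heap inequalities follow. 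The degenerate cases $|u| = 1$ or $|v| = 1$ produce a leaf child with no constraint and reduce to the base case $|y| = 1$, where both sides are empty.

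The technical core --- and the step I expect to be the main obstacle --- is (A) and (B), which become statements about the infinite ordering on prefixes of a single infinite power $x^\infty$. Using Duval's form $y = x^k z b$ ($x$ Lyndon with $\per = |x|$, $k > 0$, $z$ a proper prefix of $x$, $b$ a letter larger than the letter of $x$ following $z$), one first checks that $u = x$: no word $x^i z'$ with $z'$ a non-empty prefix of $x$ is Lyndon, because rotating it by $|x|$ yields a strictly smaller word (here $x$ Lyndon forces every non-trivial rotation of $x$ to exceed $x$), and $x^i$ with $i \ge 2$ is not primitive, so no proper Lyndon prefix of $y$ is longer than $x$. Since $x^k z$ has period $|x|$, every proper non-empty prefix of $y$, and likewise every $w$ and $uw$ in (B), is a prefix of $x^\infty$. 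For (A) one shows $x$ is the $\prec$-maximum among all prefixes of $x^\infty$: if the length of a prefix $p$ is a multiple of $|x|$ then $p$ is a power of $x$, so $p^\infty = x^\infty$ and $p \prec x$ by the length tie-break unless $p = x$; and if $p = x^m s$ with $s$ a proper non-empty prefix of $x$ then $p^\infty < x^\infty$ strictly, which reduces to comparing a prefix of $x$ with the rotation of $x$ starting at position $|s|$, using that $x$ is Lyndon and primitive, hence has no period below $|x|$, so the comparison is decided within one period. For (B), writing $w' = wt$, one reduces both comparisons, $uw$ against $uwt$ and $w$ against $wt$, to the same shape --- a periodic word against a shift of a periodic word --- and verifies that they resolve the same way; this is where the bookkeeping is heaviest and where Fine and Wilf's periodicity lemma confines each comparison to a bounded window governed by $x$. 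With (A) and (B) established, the induction goes through and the tree of internal nodes of $\llynt(y)$ labelled by prefix ranks is the Cartesian tree of the sequence of prefix ranks.
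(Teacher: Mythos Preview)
The paper does not prove Theorem~\ref{theo-dolce}; it is quoted from Dolce, Restivo and Reutenauer and stated without argument. So there is no in-paper proof to compare your attempt against directly.

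Your outline is nonetheless the right one and is essentially correct. The induction along the left standard factorisation $y=uv$, together with the characterisation of a Cartesian tree by ``in-order returns the sequence'' plus ``heap property'', reduces everything to your two facts (A) and (B). Your identification $u=x$ in the Duval form $y=x^kzb$ is right (the rotation argument you give works: rotating $x^iz'$ by $|x|$ produces $x^{i-1}z'x$, and $z'x<xz'$ because the suffix $s=z'^{-1}x$ of the Lyndon word $x$ satisfies $s\strl x$), and your proof of (A) is fine. Fact (B) you only sketch, and that sketch is a bit thin: writing $w'=wt$ and invoking Fine and Wilf does not by itself organise the case analysis you will need.

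It is worth noting that, although the paper does not prove Theorem~\ref{theo-dolce}, its proof of Theorem~\ref{theo-6} develops exactly the combinatorics that fills your gap. Writing any proper non-empty prefix of $x^\infty$ as $x^e s$ with $s$ a proper (possibly empty) prefix of $x$, the paper's Claim~1 (prepending $x^e$ preserves the relative $\prec$-order among prefixes shorter than $|x|$), Claim~2 ($P_e\prec P_f$ for $e<f$) and Claim~3 ($P_e\prec x^f$ for all $e,f$, combined with $x^k\prec\cdots\prec x$) give a complete case split. Your (A) is exactly Claim~3 specialised to $f=1$, and your (B) follows from Claims~1--3 by comparing the pairs $(w,w')=(x^as,x^bt)$ and $(xw,xw')=(x^{a+1}s,x^{b+1}t)$ in each of the four cases (same block, different blocks, one a pure power, both pure powers). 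So your plan is sound; to make the proof self-contained you should replace the one-sentence sketch of (B) by that four-case analysis, which is precisely the content of the three Claims in the proof of Theorem~\ref{theo-6}.
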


The picture below shows the left Lyndon tree of $\texttt{ababbababbabac}$ and the $\prec$-rank labels of its internal nodes.

\medskip%\noindent
%\tikzstyle{intern} = [circle, draw, very thin]
%\tikzstyle{leaf} = [draw, thin]
\tikzstyle{rank} = [circle, color=red, thin]
\begin{tikzpicture}[scale=0.37,node distance=1.5cm, auto,>=latex', thick]
%\draw[step=1cm,very thin](-1,-1) grid (27,13);
\path[->]
 node[leaf] (p0) at (0,1.5) {$0$}
 node[leaf] (p1) at (2,1.5) {$1$}
 node[leaf] (p2) at (4,1.5) {$2$}
 node[leaf] (p3) at (6,1.5) {$3$}
 node[leaf] (p4) at (8,1.5) {$4$}
 node[leaf] (p5) at (10,1.5) {$5$}
 node[leaf] (p6) at (12,1.5) {$6$}
 node[leaf] (p7) at (14,1.5) {$7$}
 node[leaf] (p8) at (16,1.5) {$8$}
 node[leaf] (p9) at (18,1.5) {$9$}
 node[leaf] (p10) at (20,1.5) {$10$}
 node[leaf] (p11) at (22,1.5) {$11$}
 node[leaf] (p12) at (24,1.5) {$12$}
 node[leaf] (p13) at (26,1.5) {$13$}
;
\path[->]
 node (q0) at (0,0) {\tt a}
 node (q1) at (2,0) {\tt b}
 node (q2) at (4,0) {\tt a}
 node (q3) at (6,0) {\tt b}
 node (q4) at (8,0) {\tt b}
 node (q5) at (10,0) {\tt a}
 node (q6) at (12,0) {\tt b}
 node (q7) at (14,0) {\tt a}
 node (q8) at (16,0) {\tt b}
 node (q9) at (18,0) {\tt b}
 node (q10) at (20,0) {\tt a}
 node (q11) at (22,0) {\tt b}
 node (q12) at (24,0) {\tt a}
 node (q13) at (26,0) {\tt c}
;
\path[-]
 node[intern] (p14) at (1,3) {} node[rank] (p14l) at (1,4) {$0$}
 node[intern] (p15) at (5,3) {} node[rank] (p15l) at (5,4) {$1$}
 node[intern] (p16) at (7,4) {} node[rank] (p16l) at (7,5) {$2$}
 node[intern] (p17) at (3,5) {} node[rank] (p17l) at (3,6) {$3$}
 node[intern] (p18) at (11,3) {} node[rank] (p18l) at (11,4) {$4$}
 node[intern] (p19) at (15,3) {} node[rank] (p19l) at (15,4) {$5$}
 node[intern] (p20) at (17,4) {} node[rank] (p20l) at (17,5) {$6$}
 node[intern] (p21) at (13,5) {} node[rank] (p21l) at (13,6) {$7$}
 node[intern] (p22) at (21,3) {} node[rank] (p22l) at (21,4) {$8$}
 node[intern] (p23) at (25,3) {} node[rank] (p23l) at (25,4) {$9$}
 node[intern] (p24) at (23,4) {} node[rank] (p24l) at (23,5) {$10$}
 node[intern] (p25) at (19,6) {} node[rank] (p25l) at (19,7) {$11$}
 node[intern] (p26) at ( 9,7) {} node[rank] (p26l) at ( 9,8) {$12$}
 (p14) edge node {} (p0)
 (p14) edge node {} (p1)
 (p15) edge node {} (p2)
 (p15) edge node {} (p3)
 (p16) edge node {} (p4)
 (p16) edge node {} (p15)
 (p17) edge node {} (p14)
 (p17) edge node {} (p16)
 (p18) edge node {} (p5)
 (p18) edge node {} (p6)
 (p19) edge node {} (p7)
 (p19) edge node {} (p8)
 (p20) edge node {} (p9)
 (p20) edge node {} (p19)
 (p21) edge node {} (p18)
 (p21) edge node {} (p20)
 (p22) edge node {} (p10)
 (p22) edge node {} (p11)
 (p23) edge node {} (p12)
 (p23) edge node {} (p13)
 (p24) edge node {} (p22)
 (p24) edge node {} (p23)
 (p25) edge node {} (p21)
 (p25) edge node {} (p24)
 (p26) edge node {} (p17)
 (p26) edge node {} (p25)
;
\end{tikzpicture}

\medskip
Denoting a non-empty prefix of $y$ by the position of its last letter, the tables below show both $\prec$-ranks of proper non-empty prefixes of $y=\texttt{ababbababbabac}$ and its sorted list of prefixes, called the prefix standard permutation of $y$ in \cite{DolceRR19b}. They are denoted by $\ran$ and $\psp$ and are inverse of each other when considered as functions from $(0,1,\dots,|y|-2)$ to itself. The sorted list is $(0,2,3,1,5,7,8,6,10,12,11,9,4)$, that is, $\texttt{a}\prec\texttt{aba}\prec\texttt{abab}\prec\texttt{ab}
\prec\texttt{ababba}\prec\texttt{ababbaba}\prec\texttt{ababbabab}\prec\texttt{ababbab}
\prec\texttt{ababbababba}\prec\texttt{ababbababbaba}\prec\texttt{ababbababbab}
\prec\texttt{ababbababb}\prec\texttt{ababb}$.

\medskip\noindent{
\begin{tabular}{@{}l@{\quad}*{16}{p{4pt}}l@{}}
$j$		&&0&1&2&3&4&5&6&7&8&9&10&11&12&13 \\
\hline
$y[j]$	&&\tt a&\tt b&\tt a&\tt b&\tt b&\tt a&\tt b&\tt a&\tt b&\tt b&\tt a&\tt b&\tt a&\tt c \\
$\ran[j]$	&&0&3&1&2&12&4&7&5&6&11&8&10&9 \\[1mm]
$r$		&&0&1&2&3&4&5&6&7&8&9&10&11&12 \\
\hline
$\psp[r]$ &&0&2&3&1&5&7&8&6&10&12&11&9&4
%$\LCP[r]$	&&0&1&1&3&0&2&0&0&1&0&0&0&0
\end{tabular}}

\medskip\noindent
The tree below is the Cartesian tree of prefix $\prec$-ranks.

\smallskip%\noindent
%\tikzstyle{intern} = [circle, draw, thin]
%\tikzstyle{leaf} = [draw, thin]
%\tikzstyle{rank} = [circle, color=red, thin]
\begin{tikzpicture}[scale=0.37,node distance=1.5cm, auto,>=latex', thick]
%\draw[step=1cm,very thin](-1,-1) grid (27,13);
\path[->]
 node[rank] (p0) at (1,1.5) {$0$}
 node[rank] (p1) at (3,1.5) {$3$}
 node[rank] (p2) at (5,1.5) {$1$}
 node[rank] (p3) at (7,1.5) {$2$}
 node[rank] (p4) at (9,1.5) {$12$}
 node[rank] (p5) at (11,1.5) {$4$}
 node[rank] (p6) at (13,1.5) {$7$}
 node[rank] (p7) at (15,1.5) {$5$}
 node[rank] (p8) at (17,1.5) {$6$}
 node[rank] (p9) at (19,1.5) {$11$}
 node[rank] (p10) at (21,1.5) {$8$}
 node[rank] (p11) at (23,1.5) {$10$}
 node[rank] (p12) at (25,1.5) {$9$}
;
\path[->]
 node (q0) at (0,1) {\tt a}
 node (q1) at (2,1) {\tt b}
 node (q2) at (4,1) {\tt a}
 node (q3) at (6,1) {\tt b}
 node (q4) at (8,1) {\tt b}
 node (q5) at (10,1) {\tt a}
 node (q6) at (12,1) {\tt b}
 node (q7) at (14,1) {\tt a}
 node (q8) at (16,1) {\tt b}
 node (q9) at (18,1) {\tt b}
 node (q10) at (20,1) {\tt a}
 node (q11) at (22,1) {\tt b}
 node (q12) at (24,1) {\tt a}
 node (q13) at (26,1) {\tt c}
;
\path[-]
 node[intern] (p14) at (1,3) {} node[rank] (p14l) at (1,4) {$0$}
 node[intern] (p15) at (5,3) {} node[rank] (p15l) at (5,4) {$1$}
 node[intern] (p16) at (7,4) {} node[rank] (p16l) at (7,5) {$2$}
 node[intern] (p17) at (3,5) {} node[rank] (p17l) at (3,6) {$3$}
 node[intern] (p18) at (11,3) {} node[rank] (p18l) at (11,4) {$4$}
 node[intern] (p19) at (15,3) {} node[rank] (p19l) at (15,4) {$5$}
 node[intern] (p20) at (17,4) {} node[rank] (p20l) at (17,5) {$6$}
 node[intern] (p21) at (13,5) {} node[rank] (p21l) at (13,6) {$7$}
 node[intern] (p22) at (21,3) {} node[rank] (p22l) at (21,4) {$8$}
 node[intern] (p23) at (25,3) {} node[rank] (p23l) at (25,4) {$9$}
 node[intern] (p24) at (23,4) {} node[rank] (p24l) at (23,5) {$10$}
 node[intern] (p25) at (19,6) {} node[rank] (p25l) at (19,7) {$11$}
 node[intern] (p26) at ( 9,7) {} node[rank] (p26l) at ( 9,8) {$12$}
 (p16) edge node {} (p15)
 (p17) edge node {} (p14)
 (p17) edge node {} (p16)
 (p20) edge node {} (p19)
 (p21) edge node {} (p18)
 (p21) edge node {} (p20)
 (p24) edge node {} (p22)
 (p24) edge node {} (p23)
 (p25) edge node {} (p21)
 (p25) edge node {} (p24)
 (p26) edge node {} (p17)
 (p26) edge node {} (p25)
;
\end{tikzpicture}

\medskip
The next theorem is the computational complement of Theorem~\ref{theo-dolce} showing additionally that the construction of the left Lyndon tree by Algorithm \Algo{LeftLyndonTree} processes the nodes of the tree in a left-to-right postorder traversal.

\begin{theorem}\label{theo-6}
Algorithm \Algo{LeftLyndonTree} applied to a Lyndon word $y$ of length $n>1$, creates and processes internal nodes of the tree $\llynt(y)$ in the order of their corresponding prefix ranks according to the infinite ordering $\prec$.
\end{theorem}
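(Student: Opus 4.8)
The plan is to establish a bijection between internal nodes of $\llynt(y)$ and proper non-empty prefixes of $y$, and then to show that Algorithm \Algo{LeftLyndonTree} creates these nodes exactly in $\prec$-increasing order of the associated prefixes. First I would make the node–prefix correspondence precise: when the \While\ loop at lines~\ref{alg3-line10}--\ref{alg3-line15} creates a new node $q$ with $\leftchild[q]=\roo[k]$ and $\rightchild[q]=\roo[j]$, the subtree rooted at $\roo[k]$ spans $y[k-\lyns[k]+1\dd k]$ and the subtree at $\roo[j]$ spans a suffix of $y[0\dd j]$; the node $q$ is identified with the prefix of $y$ ending at position $k$, i.e.\ with $y[0\dd k]$. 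I would verify that as $j$ runs over $1,\dots,n-1$ and the \While\ loop iterates, each position $k\in\{0,\dots,n-2\}$ serves as the ``split point'' of exactly one created node, giving the claimed one-to-one correspondence with proper non-empty prefixes. So the theorem reduces to: \emph{if node $q$ is created before node $q'$ during the run, and $q,q'$ correspond to prefixes $p,p'$, then $p\prec p'$.}

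Next I would analyse the order in which split points $k$ are visited. Within the processing of a single $j$, the \While\ loop visits $k=j-1$, then $k\leftarrow k-\lyns[k]$, so the split points for the nodes created at step $j$ are $j-1,\,j-1-\lyns[j-1],\,\dots$, a strictly decreasing sequence; thus the prefixes handled at step $j$ are created in \emph{decreasing} order of length among those prefixes whose last letter is $\le j-1$. Across different values of $j$, the split point at step $j$ always satisfies $k\le j-1<j$, and a fresh split point appears at step $j$ only if $\lyns[j]>1$. Using the structural fact (from \cite{Duval83}, recalled before the algorithm) that $y=x^kzb$ with $x$ Lyndon of length $\per$ and the invariant $\lyns[j]=\lyns[i]$ maintained at line~\ref{alg3-line8}, I would identify, for each split point $k$, the pair of Lyndon factors $(u,v)$ being concatenated at $q$: $v=y[k+1\dd j']$ for the appropriate $j'$, $u$ is the Lyndon factor ending at $k$, and the prefix $p$ associated with $q$ equals $p=u\,(\text{earlier stuff})$ — more precisely $p=y[0\dd k]$, which is itself the concatenation of the Lyndon factors corresponding to the left-spine of the subtree rooted at $\roo[k]$.

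The heart of the argument is the comparison step, and this is where I expect the main obstacle. I would prove $p\prec p'$ by relating the creation order to the heap/Cartesian-tree property of Theorem~\ref{theo-dolce}: a left-to-right postorder traversal of a Cartesian tree visits nodes in increasing key order precisely because every node has a smaller key than its ancestors and the postorder visits a node only after its entire left and right subtrees. So I would show that the sequence of nodes created by Algorithm \Algo{LeftLyndonTree} \emph{is} a left-to-right postorder traversal of $\llynt(y)$: at step $j$, when $\lyns[j]>1$, the \While\ loop attaches a right-leaning chain of new nodes onto the previously built left-spine, and each such new node is created only after both its children's subtrees are complete — its right child $\roo[j]$ is the tree just finished for the Lyndon factor ending at $j$, and its left child $\roo[k]$ was completed at an earlier step. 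Granting this, Theorem~\ref{theo-dolce} gives that postorder $=$ $\prec$-rank order, so the creation order coincides with $\prec$-order on the associated prefixes, which is the claim. The delicate point to get right is that the \emph{partial} roots $\roo[k]$ encountered inside the \While\ loop are genuinely the roots of \emph{completed} left-Lyndon subtrees at the moment they are used (no node is ever created ``too early'' relative to a descendant), and that the chain built at step $j$ does not disturb the postorder position of nodes built at steps $<j$; I would settle this by a loop invariant stating that after processing position $j$, the forest $\{\,\llynt(y[i_t\dd j_t])\,\}$ of completed subtrees is exactly the set of maximal Lyndon subtrees of $\llynt(y)$ contained in $y[0\dd j]$, listed left to right, and that their creation order so far is the postorder restricted to these nodes.
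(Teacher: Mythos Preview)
Your reduction to ``the algorithm performs a left-to-right postorder traversal of $\llynt(y)$'' plus Theorem~\ref{theo-dolce} does not close the argument, because the key step ``a left-to-right postorder traversal of a Cartesian tree visits nodes in increasing key order'' is \emph{false} for Cartesian trees in general. Take the rank sequence $(2,3,1)$: its max-heap Cartesian tree has $3$ at the root, $2$ as left child, $1$ as right child, and postorder yields $2,1,3$, not the sorted order. The heap property you cite guarantees only that every node has smaller key than each of its ancestors; it says nothing about the relative order of keys in the left subtree versus the right subtree of a node. For postorder to coincide with sorted order you need, recursively at every internal node, that \emph{all} ranks in the left subtree are smaller than \emph{all} ranks in the right subtree. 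Theorem~\ref{theo-dolce} does not give you this.

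That separation property is precisely the combinatorial core of the theorem, and it is what the paper proves directly. The paper writes $y=x^kzb$, groups the proper prefixes into sets $P_e=\{x^eu:e|x|<|x^eu|<\min((e+1)|x|,|y|)\}$ together with the pure powers $x,\dots,x^k$, and then proves three claims: (1) within each $P_e$ the relative $\prec$-order matches that of $P_0$; (2) $P_e\prec P_f$ whenever $e<f$; (3) every $P_e$ is $\prec$-below every $x^f$. These claims together yield $P_0\prec P_1\prec\cdots\prec P_k\prec x^k\prec\cdots\prec x$, which is exactly the left-subtree/right-subtree separation at each level of the recursion, and an induction on $|x|$ finishes. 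Your plan would still need an argument of this kind; once you have it, the detour through Theorem~\ref{theo-dolce} and postorder becomes unnecessary.
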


\begin{proof}
Since word $y$ is a Lyndon word not reduced to a single letter, it is of the form $x^kzb$ where $x$ is a Lyndon word of length $\period(x^kz)$, $k>0$, $z$ is a proper prefix of $x$ and $b$ is a letter greater than letter $a$ following prefix $z$ in $x$ (see \cite{Duval83}).

Algorithm \Algo{LeftLyndonTree} processes nodes of the tree $\llynt(y)$ as follows. First it builds $\llynt(x)$ and Lyndon subtrees of the next occurrences of $x$ in a left to right manner. It continues with the trees related to $z$. Eventually during the last bundling (run of instructions at lines~\ref{alg3-line10}-\ref{alg3-line15}) the algorithm builds $\llynt(zb)$ and follows with the nodes corresponding to the concatenations $x\cdot zb$, $x\cdot xzb$, \dots, $x\cdot x^{k-1}zb$ in that order.

We will prove the statement by induction on the length of the period $|x|$ of $x^kz$.
If $|x|=1$, $x$ is reduced to a single letter and $y$ is of the form $a^kb$ for two letters $a$ and $b$ with $a<b$. Nodes associated with prefixes $a^k$, $a^{k-1}$, \dots, $a$ are processed in this order, which matches the $\prec$-order of prefixes, $a^k \prec a^{k-1} \prec \cdots \prec a$, as expected.

We then assume $|x|>1$ and consider disjoint groups of non-empty proper prefixes of $y$. For $e=0,1,\dots,k$, let
$$P_e = \{x^eu \mbox{ prefix of } y \mid e|x| < |x^eu| < \min\{(e+1)|x|,|y|\}\}.$$
The main part of the proof relies on three claims that we prove first.

\paragraph{Claim 1:}
\textit{prefixes $x^eu\in P_e$, $0<e\leq k$, are in the same relative $\prec$-order as prefixes $u\in P_0$.}
Let $u,v\in P_0$ with $u \prec v$ and let us show $x^eu\prec x^ev$ considering two cases.

\medskip\noindent
{\begin{picture}(300,60)(0,0)
\put(120,49){\framebox(160,7){$x$}} %$x$
\put(  0,33){\framebox(120,10){$u$}} %$u$
\put(120,33){\framebox(120,10){$\bar{u}$}} %$u$
\put(240,33){\dashbox{1}(0,16)}
\put(  0,13){\framebox( 80,10){$v$}}
\put( 80,13){\framebox( 40,10){$w$}}
\put(120,13){\framebox(120,10){$\bar{v}$}}
\put(  0,13){\dashbox{1}(0,30)}
\put( 80,23){\dashbox{1}(0,10)}
\put( 80, 0){\framebox(160,7){$x$}} %$x$
\put( 80, 0){\dashbox{1}(0,13)}
\put(120, 7){\dashbox{1}(0,49)}
\put(240, 0){\dashbox{1}(0,13)}
\end{picture}}

\smallskip
\textbf{Case $u^\infty = v^\infty$ and $|u|>|v|$.} By the Periodicity lemma $u$, $v$ and $v^{-1}u$ are powers of the same word. Let $w=v^{-1}u$, $\bar{v}=w^{-1}x$ and $\bar{u}$ the prefix of $x$ of length $|\bar{v}|$ (see picture). Since $x$ is a Lyndon word, $\bar{u} < x < \bar{v}$, which implies $ux < vx$ because $w$ is a prefix of $x$.
Therefore we have $(x^eu)^\infty < (x^ev)^\infty$, that is, $x^eu \prec x^ev$.

\textbf{Case $u^\infty < v^\infty$.}
Assume $u$ is shorter than $v$ and let $h$ be the largest exponent for which $u^h$ is a prefix of $v$. It is a proper prefix because $u^\infty < v^\infty$ and then $w=(u^h)^{-1}v$ is not empty.

If $|u|\leq|w|$, we have $u\strl w$, which implies $ux \strl vx$ and $(x^eu)^\infty < (x^ev)^\infty$, that is, $x^eu \prec x^ev$.

\medskip\noindent
{\begin{picture}(300,60)(0,0)
\put( 40,49){\framebox(180,7){$x$}} %$x$

\put(  0,33){\framebox( 40,10){$u$}} %$u$
\put( 40,33){\framebox( 40,10){$u$}} %$u$
\put( 80,33){\framebox( 40,10){$u$}} %$u$
\put(120,33){\framebox( 40,10){$u$}} %$u$
%\put( 40,33){\dashbox{1}(0,15)}
\put(160,33){\dashbox{1}(0,16)}

\put(  0,13){\framebox(140,10){$v$}}
\put(140,13){\framebox( 40,10){$u$}}
\put(  0,13){\dashbox{1}(0,30)}
\put( 40,23){\dashbox{1}(0,26)}
\put( 80,23){\dashbox{1}(0,20)}
\put(120,23){\dashbox{1}(0,20)}

\put(140, 0){\framebox(180,7){$x$}} %$x$
\put(140, 0){\dashbox{1}(0,23)}
\put(180, 7){\dashbox{1}(0,16)}
\end{picture}}

\smallskip
If $|u|>|w|$, $v$ is a proper prefix of $u^{h+1}$ but $u^{h+1}$ shorter than $vu$ cannot be a prefix of it due to the Periodicity lemma applied on periods $|u|$ and $|v|$ of $u^{h+1}$. Then $u\strl wu$ and since $u$ is a prefix of $x$ it implies $ux\strl vx$ and $(x^eu)^\infty < (x^ev)^\infty$, that is, $x^eu \prec x^ev$ as before.

The situation in which $u$ is longer than $v$ is fairly symmetric and treated similarly. Therefore again $u \prec v$ implies $x^eu \prec x^ev$, which proves the claim.

\paragraph{Claim 2:}
\textit{prefixes in $P_e$ are $\prec$-smaller than prefixes in $P_f$ when $0\leq e < f \leq k$.}
Let $u\in P_e$ and $v\in P_{f}$. We have to compare $u$ and $v$ according to $\prec$, that is, to compare $u^\infty$ and $v^\infty$.

\medskip\noindent
{\begin{picture}(300,45)(0,0)
\put(  0,18){$x^kz$} %{$yb^{-1}$}
\put( 20,15){\framebox(310,10){}}
\put( 20,15){\framebox( 70,10){$x$}}
\put( 90,15){\framebox( 70,10){$x$}}
\put(160,15){\framebox( 70,10){$x$}}
\put(230,15){\framebox( 70,10){$x$}}
\put(300,15){\framebox( 30,10){$z$}}
\put(  0,33){$u^\infty$}
\put( 20,33){\framebox(110,7){}} %$u$
\put(130,33){\framebox(110,7){}} %$u$
\put(240,33){\makebox( 25,7){$\cdots$}}
\put( 20,25){\dashbox{1}(0,15)}
\put( 20,33){\makebox(30,7){$r$}}
\put( 50,25){\dashbox{1}(0,15)}
\put(130,25){\dashbox{1}(0,15)}
\put(130,33){\makebox(30,7){$r$}}
\put(160,33){\dashbox{1}(0,7)}
\put(  0, 0){$v^\infty$}
\put( 20, 0){\framebox(190,7){}} %$v$
\put(210, 0){\makebox( 25,7){$\cdots$}}
\put( 20, 0){\dashbox{1}(0,15)}
\put(210, 0){\dashbox{1}(0,15)}
\put(130, 0){\makebox(30,7){$s$}}
\put(130, 0){\dashbox{1}(30,15)}
\end{picture}}

\medskip
When $e>0$, $u$ is longer than $x$. Let $r$ be the prefix of $u$ for which $|ur|=|x^{e+1}|$ (see picture in which $u\in P_1$ and $v\in P_2$) and $s$ the suffix of $x$ of the same length.
Comparing $u^\infty$ and $v^\infty$ amounts to compare $r$ and $s$ because $u$ is a prefix of $v$. Since $r$ is a prefix and $s$ a suffix of the Lyndon word $x$, we have $r<s$ and even $r\strl s$, then $u^\infty < v^\infty$ and $u \prec v$.

When $e=0$, $u$ is shorter than $x$. Let then $h$ be the largest integer for which $u^h$ is a prefix of $x$. It is a proper prefix because $x$ is a Lyndon word and $w=(u^h)^{-1}x$ is not empty. As in the proof of previous claim, $u^{h+1}$ cannot be prefix of $xu$ that is a prefix of $v$. The same conclusion follows, that it, $u^{h+1}\strl vu$ and eventually $u\prec v$.

\paragraph{Claim 3:}
\textit{prefixes in $P_e$, $0\leq e\leq k$, are $\prec$-smaller than prefixes $x^f$, $0<f\leq k$.}
To prove the claim, in view of the statement of Claim 2 and the fact $x^k \prec x^{k-1} \prec x$ by definition, it is enough to show that $P_k \prec x^k$.
Note that if $P_k$ is empty the proof can be done with $P_{k-1}$ instead, and if in addition $k=1$ then we are left with an element in the proof of Claim 2.

Let $x^ku\in P_k$, $s=u^{-1}x$ and $r$ the prefix of $x$ of length $|s|$. As prefix and suffix of $x$, $r$ and $s$ satisfy $r < s$. Since $x^kur<x^kus=x^{k+1}$ and $r$ is a prefix of $x$, it results $(x^ku)^\infty < x^\infty$ and eventually $x^ku \prec x^k$. This prove the claim.

\smallskip
To summarise, claims show $$P_0 \prec P_1 \prec \cdots \prec P_k \prec x^k \prec x^{k-1} \prec \cdots \prec x.$$

Let us go back to induction. By induction hypothesis, the result holds for internal nodes of $\llynt(x)$ corresponding to prefixes in $P_0$.

Consider the next occurrences of $x$. Since the Lyndon suffix table for each of them is copied from that of prefix $x$ due to the instruction at line~\ref{alg3-line8} in Algorithm \Algo{LeftLyndonTree}, the Lyndon trees of all occurrences of $x$ have the same structure. Therefore, both from the induction hypothesis and from Claim 1, the order in which internal nodes of the $e$th occurrence of $x$ are processed and created matches the $\prec$-order of prefixes in $P_e$, for $0<e\leq k$.

The algorithm processes occurrences of $x$ from left to right, which corresponds to the result of Claim 2.
The treatment of $zb$ is done at the beginning of the bundling run, which also corresponds to the fact that prefixes in $P_k$ are $\prec$-larger than all prefixes that have been considered before.

Finally, the last part of the bundling creates nodes associated with $x^k$, $x^{k-1}$, \dots, $x$ in that order, which matches the order $x^k \prec x^{k-1} \prec \cdots \prec x$.

This ends the proof of the theorem.
\end{proof}

An immediate consequence of Theorem~\ref{theo-6} is that Algorithm \Algo{LeftLyndonTree} can be down-graded and adapted to compute directly the $\prec$-sorted list of non-empty proper prefixes of a Lyndon word, that is, to compute its prefix standard permutation (PSP). %This is what the following algorithm does.
 See the details of this adaptation in the following algorithm.

\medskip\noindent
\begin{algo}{PrefixStandardPermutation}{y \textrm{ Lyndon word of length } n}
  \SET{\psp}{()}
  \SET{(\lyns[0],\per,i)}{(1,1,0)}
  \DOFORI{j}{1}{n-1}
    \IF{y[j] \neq y[i]} \RCOM{6}{$y[j] > y[i] = y[j-\per]$}
      \SET{\lyns[j]}{j+1}
      \SET{(\per,i)}{(j+1,0)}
    \ELSE
      \SET{\lyns[j]}{\lyns[i]}
      \SET{i}{i+1 \bmod \per}
    \FI
    \SET{(m,k)}{(1,j-1)}
    \DOWHILE{m < \lyns[j]}
      \SET{\psp}{\psp\cdot(j-m)}
      \SET{m}{m+\lyns[k]}
      \SET{k}{k-\lyns[k]}
    \OD
  \OD
  \RETURN{\psp}
\end{algo}

\begin{corollary}
Sorting the proper non-empty prefixes of a Lyndon word of length $n$ according to the infinite ordering $\prec$ can be done in time $O(n)$ in the letter-comparison model.
\end{corollary}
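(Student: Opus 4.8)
The plan is to obtain the statement directly from Theorem~\ref{theo-6} once it is checked that Algorithm \Algo{PrefixStandardPermutation} is just a stripped-down copy of Algorithm \Algo{LeftLyndonTree}. First I would observe that the two algorithms share the same control flow: the same \For\ loop over $j$, the same letter comparison $y[j]\neq y[i]$ with the same updates of $\lyns[j]$ and of the pair $(\per,i)$, and the same inner \While\ loop, guarded by the same condition $\ell<\lyns[j]$ (resp. $m<\lyns[j]$) and driven by the same arithmetic on a counter and an index $k$. The only change is that where \Algo{LeftLyndonTree} allocates a new internal node $q$, sets its children to $\roo[k]$ and $\roo[j]$, and updates $\roo[j]$, Algorithm \Algo{PrefixStandardPermutation} merely appends the integer $j-m$ to the list $\psp$. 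Consequently the two algorithms compute the same table $\lyns$ (correctly, by Proposition~\ref{prop-3}) and traverse their inner loops in lockstep.

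Next I would pin down what $j-m$ represents. In \Algo{LeftLyndonTree} the counter pair is initialised to $(1,j-1)$ and updated by $(\ell,k)\leftarrow(\ell+\lyns[k],\,k-\lyns[k])$, so the identity $k=j-\ell$ is an invariant of the \While\ loop; the same arithmetic gives $k=j-m$ in \Algo{PrefixStandardPermutation}. At the corresponding iteration, \Algo{LeftLyndonTree} creates the internal node whose left subtree $\roo[k]$ has rightmost leaf $k$, that is, the node identified, through the one-to-one correspondence recalled before Theorem~\ref{theo-dolce}, with the proper non-empty prefix of $y$ ending at position $k=j-m$. Hence the successive values appended to $\psp$ are exactly the positions identifying the internal nodes of $\llynt(y)$, listed in the order in which \Algo{LeftLyndonTree} creates those nodes. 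By Theorem~\ref{theo-6} that order is the $\prec$-increasing order of the associated prefixes; since the node-to-prefix correspondence is a bijection, $\psp$ is precisely the $\prec$-sorted list of proper non-empty prefixes of $y$, i.e. its prefix standard permutation.

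For the complexity I would argue as in the proof of Proposition~\ref{prop-4}. Every instruction of the \For\ loop other than the \While\ loop runs in constant time, and the \For\ loop is executed $n-1$ times, which contributes $O(n)$. Each pass through the body of the \While\ loop runs in constant time and appends exactly one entry to $\psp$; since the final $\psp$ has one entry per internal node of $\llynt(y)$ and that tree has exactly $n-1$ internal nodes, the \While\ loop body is entered $n-1$ times over the whole run. Hence the total (amortised) running time is $O(n)$, and the only character accesses are the comparisons $y[j]\neq y[i]$, so the bound holds in the letter-comparison model.

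The line-by-line comparison of the two pseudo-codes is routine; the one point that deserves care — and where a reader should slow down — is the bookkeeping invariant $j-m=k$ together with the claim that the position $k$ it produces is indeed the position attached to the node \Algo{LeftLyndonTree} would create at that step, since both the correctness (via Theorem~\ref{theo-6}) and the $n-1$ count used for the amortised bound rest on that identification.
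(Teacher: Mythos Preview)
Your proposal is correct and follows essentially the same approach as the paper: both argue that Algorithm \Algo{PrefixStandardPermutation} is obtained from \Algo{LeftLyndonTree} by replacing each internal-node creation with the appending of the corresponding prefix index, so correctness follows from Theorem~\ref{theo-6} and the $O(n)$ bound from Proposition~\ref{prop-4}. Your write-up is considerably more explicit than the paper's two-sentence proof, in particular your verification of the invariant $k=j-m$ linking the appended index to the left child's rightmost leaf, which the paper leaves entirely to the reader.
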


\begin{proof}
It essentially suffices to substitute the handling of sequence $\psp$ to the processing of internal nodes of the Lyndon tree in Algorithm \Algo{LeftLyndonTree}. The change is realised by Algorithm \Algo{PrefixStandardPermutation} above.
\end{proof}

%---------%---------%---------%---------%---------%---------%---------%--------%
\section{Reverse-engineering a PSP}\label{sect:reverse}

This section discusses how to recover a word of length $n$ from a permutation of $(0,1,\dots,n-2)$ assumed to be its prefix standard permutation (PSP).

We first consider the case of binary words on the alphabet $\{\texttt{a},\texttt{b}\}$. Function $\psp$ from $\mathcal{L}_n=\mathcal{L}\cap\{\texttt{a},\texttt{b}\}^n$ to the set of permutations of $(0,1,\cdots,n-2)$ is one-to-one. Thus $\psp^{-1}$ is a function from $\psp(\mathcal{L}_n)$ to $\mathcal{L}_n$
and $\psp^{-1}(\psp(y))=y$.  To show the property, given a permutation $p$ of $(0,1,\cdots,n-2)$, we propose the following algorithm to recover the possible word $y$ that admits the permutation as its PSP. % that belongs to $\psp(\mathcal{L}_n)$.

\medskip\noindent
\begin{algo}{InversePsp}{p \textrm{ permutation of } (0,1,\dots,n-2)}
  \SET{\ran}{\mbox{inverse of } p}
  \SET{C}{\mbox{Cartesian tree of } \ran}
  \SET{T}{\mbox{$C$ extended with leaves to form a complete binary tree}}
  \SET{L}{\mbox{labelled $T$: left-child leaves labelled by \texttt{a} others by \texttt{b}}}
  \SET{y}{\mbox{word-label of leaves of } L}
  \IF{\psp(y)=p}
    \RETURN{y}
  \ELSE
    \RETURN{p\not\in \psp(\mathcal{L}_n)}
  \FI
\end{algo}

From the permutation $p=(1,0,4,3,5,2,6)$ the algorithm computes $\ran=(1,0,5,3,2,4,6)$ and eventually the labelled Lyndon tree below left. The word label of its leaves is $\texttt{aabaabbb}$ and effectively $\psp(\texttt{aabaabbb})=(1,0,4,3,5,2,6)$.

\medskip%\noindent
%\tikzstyle{intern} = [circle, draw, very thin]
%\tikzstyle{leaf} = [draw, thin]
\begin{tikzpicture}[scale=0.37,node distance=1.5cm, auto,>=latex', thick]
%\draw[step=1cm,very thin](-1,-1) grid (27,13);
\path[->]
 node[leaf] (p0) at (0,1.5) {$0$}
 node[leaf] (p1) at (2,1.5) {$1$}
 node[leaf] (p2) at (4,1.5) {$2$}
 node[leaf] (p3) at (6,1.5) {$3$}
 node[leaf] (p4) at (8,1.5) {$4$}
 node[leaf] (p5) at (10,1.5) {$5$}
 node[leaf] (p6) at (12,1.5) {$6$}
 node[leaf] (p7) at (14,1.5) {$7$}
;
\path[->]
 node (q0) at (0,0) {\tt a}
 node (q1) at (2,0) {\tt a}
 node (q2) at (4,0) {\tt b}
 node (q3) at (6,0) {\tt a}
 node (q4) at (8,0) {\tt a}
 node (q5) at (10,0) {\tt b}
 node (q6) at (12,0) {\tt b}
 node (q7) at (14,0) {\tt b}
;
\path[-]
 node[intern] (p8) at (3,3) {}   node (p8l) at (3,4) {\color{red} 0}
 node[intern] (p9) at (1,4) {}   node (p9l) at (1,5) {\color{red} 1}
 node[intern] (p10) at (9,3) {}  node (p10l) at (9,4) {\color{red} 2}
 node[intern] (p11) at (7,4) {}  node (p11l) at (7,5) {\color{red} 3}
 node[intern] (p12) at (11,5) {} node (p12l) at (11,6) {\color{red} 4}
 node[intern] (p13) at (5,6) {}  node (p13l) at (5,7) {\color{red} 5}
 node[intern] (p14) at (13,7) {} node (p14l) at (13,8) {\color{red} 6}
 (p8) edge node {} (p1)
 (p8) edge node {} (p2)
 (p9) edge node {} (p0)
 (p9) edge node {} (p8)
 (p10) edge node {} (p4)
 (p10) edge node {} (p5)
 (p11) edge node {} (p3)
 (p11) edge node {} (p10)
 (p12) edge node {} (p11)
 (p12) edge node {} (p6)
 (p13) edge node {} (p9)
 (p13) edge node {} (p12)
 (p14) edge node {} (p13)
 (p14) edge node {} (p7)
;
\end{tikzpicture}
\hfill
%\medskip\noindent
%\tikzstyle{intern} = [circle, draw, very thin]
%\tikzstyle{leaf} = [draw, thin]
\begin{tikzpicture}[scale=0.37,node distance=1.5cm, auto,>=latex', thick]
%\draw[step=1cm,very thin](-1,-1) grid (27,13);
\path[->]
 node[leaf] (p0) at (0,1.5) {$0$}
 node[leaf] (p1) at (2,1.5) {$1$}
 node[leaf] (p2) at (4,1.5) {$2$}
 node[leaf] (p3) at (6,1.5) {$3$}
 node[leaf] (p4) at (8,1.5) {$4$}
 node[leaf] (p5) at (10,1.5) {$5$}
 node[leaf] (p6) at (12,1.5) {$6$}
 node[leaf] (p7) at (14,1.5) {$7$}
;
\path[->]
 node (q0) at (0,0) {\tt a}
 node (q1) at (2,0) {\tt a}
 node (q2) at (4,0) {\tt b}
 node (q3) at (6,0) {\tt a}
 node (q4) at (8,0) {\tt b}
 node (q5) at (10,0) {\tt a}
 node (q6) at (12,0) {\tt b}
 node (q7) at (14,0) {\tt b}
;
\path[-]
 node[intern] (p8) at (3,3) {}   node (p8l) at (3,4) {\color{red} 0}
 node[intern] (p9) at (1,4) {}   node (p9l) at (1,5) {\color{red} 1}
 node[intern] (p10) at (7,3) {}  node (p10l) at (7,4) {\color{red} 3}
 node[intern] (p11) at (5,5) {}  node (p11l) at (5,6) {\color{red} 4}
 node[intern] (p12) at (11,3) {} node (p12l) at (11,4) {\color{red} 2}
 node[intern] (p13) at (9,6) {}  node (p13l) at (9,7) {\color{red} 5}
 node[intern] (p14) at (13,7) {} node (p14l) at (13,8) {\color{red} 6}
 (p8) edge node {} (p1)
 (p8) edge node {} (p2)
 (p9) edge node {} (p0)
 (p9) edge node {} (p8)
 (p10) edge node {} (p3)
 (p10) edge node {} (p4)
 (p11) edge node {} (p9)
 (p11) edge node {} (p10)
 (p12) edge node {} (p5)
 (p12) edge node {} (p6)
 (p13) edge node {} (p11)
 (p13) edge node {} (p12)
 (p14) edge node {} (p13)
 (p14) edge node {} (p7)
;
\end{tikzpicture}

However with the permutation $p=(1,0,5,3,2,4,6)$, the algorithm computes $\ran=(1,0,4,3,5,2,6)$ and the correponding $L$ tree (above right), which produces the word $\texttt{aabababb}$. But $\psp(\texttt{aabababb})=(1,0,3,2,5,4,6)$ is not the input permutation. This is because obviously not all the $(n-1)!$ permutations are PSPs of some binary Lyndon words (less than $2^n$). It may also happen that word $y$ built in the procedure is not even a Lyndon word.

\begin{proposition}
On a binary alphabet the prefix standard permutation is a one-to-one function and computing the Lyndon word $y$ for which $\psp(y)$ is a given valid permutation can be done in linear time.
\end{proposition}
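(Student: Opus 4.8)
The plan is to reduce both assertions---injectivity of $\psp$ and its linear-time inversion---to one structural fact about left Lyndon trees over a two-letter alphabet $\{\texttt{a},\texttt{b}\}$ with $\texttt{a}<\texttt{b}$: in $\llynt(y)$ the leaf at position $j$ is the left child of its parent if and only if $y[j]=\texttt{a}$ (equivalently, it is a right child iff $y[j]=\texttt{b}$). Granting this, Theorem~\ref{theo-dolce} supplies the rest: from a permutation $p=\psp(y)$ one recovers $\ran$ as its inverse, hence the Cartesian tree $C$ of $\ran$, which by Theorem~\ref{theo-dolce} is exactly the tree of internal nodes of $\llynt(y)$; completing $C$ with $n$ leaves in the only possible way (one leaf before the in-order first internal node, one between consecutive internal nodes, one after the last) rebuilds the shape of $\llynt(y)$; and the structural fact then fixes every leaf label, so $y$ is uniquely determined by $p$. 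This gives injectivity, and Algorithm \Algo{InversePsp} performs exactly these steps.

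The heart of the argument is the structural fact, which I would derive from two elementary observations specific to the binary case. First, a Lyndon word $w$ with $|w|>1$ on $\{\texttt{a},\texttt{b}\}$ starts with \texttt{a} and ends with \texttt{b}: were it to start with \texttt{b}, some rotation bringing an \texttt{a} to the front would be strongly smaller (and $\texttt{b}^{|w|}$ is not Lyndon), contradicting minimality; were it to end with \texttt{a}, Proposition~\ref{prop-1}(ii) would give $w<\texttt{a}$, impossible since \texttt{a} is the least letter and $|w|>1$. Second, for a leaf $j$ let $p$ be its parent, corresponding to the left standard factorisation $w=uv$ of the factor attached to $p$, with $u$ the longest proper Lyndon prefix, $v$ a Lyndon word, and $u<v$. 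If leaf $j$ is the left child of $p$ then $\llynt(u)$ is a single leaf, so $|u|=1$ and $u=y[j]$; since $u<v$ and, when $|v|>1$, $v$ starts with \texttt{a}, the only possibility on a binary alphabet is $u=\texttt{a}$, whence $y[j]=\texttt{a}$. Symmetrically, if leaf $j$ is the right child of $p$ then $|v|=1$, $v=y[j]$, and $w=uv$ is a Lyndon word of length $>1$ ending with $v$, forcing $v=\texttt{b}$ and $y[j]=\texttt{b}$. Since $n>1$ the root of $\llynt(y)$ is internal, so every leaf is a child of some internal node, and in a full binary tree it is either a left child or a right child but not both; the two cases are therefore exhaustive and the fact follows.

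Injectivity then reads off: $\psp(y)=\psp(y')$ for binary Lyndon words of the same length $n$ forces equal $\ran$, equal Cartesian trees, equal completed trees, and equal left/right status of every leaf, hence $y=y'$ (the case $n=1$ being vacuous). For the inversion, given a valid PSP $p$ one runs Algorithm \Algo{InversePsp}: by the above it produces the unique binary Lyndon word $y^{*}$ with $\psp(y^{*})=p$, so the final test $\psp(y)=p$ succeeds and $y^{*}$ is returned. The running time is linear: inverting a permutation is $O(n)$; the Cartesian tree of $\ran$ is built in $O(n)$ by the classical stack sweep; forming the complete tree $T$, labelling left-child leaves \texttt{a} and the others \texttt{b}, and reading the leaf word are each $O(n)$; and the verification $\psp(y)=p$ is done in $O(n)$ by Algorithm \Algo{PrefixStandardPermutation} followed by a linear comparison (one may also test that $y$ is a Lyndon word with Algorithm \Algo{LyndonWordPrefix} in $O(n)$, so the procedure is total and correctly rejects inputs that are not PSPs of binary Lyndon words).

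I expect the principal obstacle to be the case analysis in the structural fact---specifically, confirming that the single-letter factor $u$ (resp.\ $v$) that necessarily appears at the parent of a left-child (resp.\ right-child) leaf is forced to equal \texttt{a} (resp.\ \texttt{b}), which is exactly the point where the two-letter restriction is essential and where a larger alphabet would break the reconstruction. A secondary technical point is the precise correspondence between the in-order positions of internal nodes of $\llynt(y)$ and the positions $0,\dots,n-2$ on $y$ underlying the passage from $\psp$ to the Cartesian tree; this is already present in the discussion preceding Theorem~\ref{theo-dolce} and only needs to be invoked carefully.
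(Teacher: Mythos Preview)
Your proposal is correct and follows essentially the same approach as the paper: both rely on Theorem~\ref{theo-dolce} and Algorithm \Algo{InversePsp}, reconstructing the Cartesian tree from $\ran$, completing it with leaves, and applying the labelling rule (left-child leaf $\mapsto$ \texttt{a}, right-child leaf $\mapsto$ \texttt{b}) to recover $y$ in linear time. You additionally supply a rigorous proof of the structural fact underpinning that labelling rule, which the paper's very brief proof leaves implicit.
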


\begin{proof}
From the above discussion and Algorithm \Algo{InversePsp}, the one-to-one feature is a consequence of Theorem~\ref{theo-dolce}. As for the running time it comes from the linearity of all operations, especially those of the Cartesian tree construction\footnote{See for example \texttt{https://en.wikipedia.org/wiki/Cartesian\_tree}} and of the prefix standard permutation computation by Algorithm \Algo{PrefixStandardPermutation} in Section~\ref{sect:sort}.
\end{proof}

On alphabets with more than two letters the function $\psp$ is not one-to-one. For example $(0, 2, 3, 1, 4)$ is the PSP of Lyndon words \texttt{ababbb}, \texttt{ababbc}, \texttt{ababcb} and \texttt{ababcc}, and permutation $(0,1,2,3)$ is the PSP of all (Lyndon) words in $\texttt{a}\{\texttt{b},\texttt{c}\}^4$.

%\textbf{PARAGRAPH TO REVIEW, SORRY IF i MESS UP YOUR VERSION. WHY BORDERED? }\\
%\textbf{\color{red} GB have a look now}\\

%Given the permutation $p=\psp(y)$ for a Lyndon word $y$ of length $n$, we can retrieve periods of proper non-empty prefixes of $y$ by tracing back them on $p$. To do so, let $j=n-2$.
Nevertheless, given the permutation $p=\psp(z)$ associated with a Lyndon word $z$ of length $n$, we can compute an equivalent word $y$ whose PSP is $p$. The basic element to do it is to deal with prefix periods of the word.

Indeed, periods of prefixes of $y$ can be retrieved from $p$ by looking at some positions where $p$ is decreasing. Due to properties (proof of theorem 6, after claim 3; the only case where a longer prefix is $\prec$-smaller than a shorter prefix is when the shorter one is a period of the longer) and the definition of the prefix standard permutation, when there is a decrease in the order of prefixes it is because there is a non-empty border.  Therefore scanning $p$ from right to left enables tracing the periodicity of each proper prefix. This is how Algorithm \Algo{PeriodsFromPsp} computes the period table of a word from its PSP.

%\textbf{Paragraph that could be used for a formal proof if you feel it:}
%Let's start with  $j=n-2$.
%% be the ending position of the longest proper prefix of $y$ and
%Then if $p[j]>p[j-1]$ we have $\per[j]=j+1$ otherwise, this means $p[j]<p[j-1]$ so we can retrieve the period of $y[0 \cdots j]$, that is, $\per[j]= p[j]+1$ due to the fact that $y[0 \cdots p[j-1]] \prec y[0 \cdots p[j]]$ and $p[j]<p[j-1]$.
%Now we move on to $j = p[j]$, note that the period of the longest bordered prefix is also a Lyndon word so we can apply the same step recursively to retrieve the period of longest bordered prefix of this period. If no such prefix exists then we can assume that this prefix is $ab^k$.

\begin{algo}{PeriodsFromPsp}{p \mbox{ PSP of a Lyndon word of length } n}
  \SET{q}{n}
  \DOFORD{j}{n-2}{1}
    \IF{j \geq q}
      \SET{\per[j]}{q}
    \ELSEIF{p[j] < p[j-1]} \label{algo:pfp-5}
      \SET{\per[j]}{q \leftarrow p[j]+1}
    \ELSE
      \SET{\per[j]}{j+1}
    \FI
  \OD
  \SET{\per[0]}{1}
  \RETURN{\per}
\end{algo}

%\begin{algo}{PeriodsFromPsp}{p \mbox{ PSP of a Lyndon word of length } n}
%  \SET{j}{n-2}
%  \DOWHILE{ j>0}
%    \IF{p[j] < p[j-1]} \label{algo:pfp-3}
%      \SET{q}{p[j]+1}
%      \DOWHILE{j \geq q}
%        \SET{\per[j]}{q}
%        \SET{j}{j-1}
%      \OD
%      \ELSE
%        \SET{\per[j]}{j+1}
%        \SET{j}{j-1}
%      \FI
%  \OD
%  \SET{\per[0]}{1}
%  \RETURN{\per}
%\end{algo}

In the example below, positions on the PSP $p$, where condition at line~\ref{algo:pfp-5} is met, are $j=6$ and $j=2$ corresponding respectively to periods $4$ and $2$.

Here is the step-by-step computation of the periods.
For the following example, we start at $j=7$, since $p[j]>p[j-1]$ then $\per[j]= j+1 =8$ , now $p[6]<p[5]$ which means $\per[6]= p[6]+1= 4$. Next, we can move on to position $p[6]-1$;  $p[2]< p[1]$ so $\per[2]= 1+1$, and we are done.

\medskip%\noindent
%\tikzstyle{intern} = [circle, draw, very thin]
%\tikzstyle{leaf} = [draw, thin]
\begin{tikzpicture}[scale=0.37,node distance=1.5cm, auto,>=latex', thick]
%\draw[step=1cm,very thin](-1,-1) grid (27,13);
\path[->]
 node[leaf] (p0) at (0,1.5) {$0$}
 node[leaf] (p1) at (2,1.5) {$1$}
 node[leaf] (p2) at (4,1.5) {$2$}
 node[leaf] (p3) at (6,1.5) {$3$}
 node[leaf] (p4) at (8,1.5) {$4$}
 node[leaf] (p5) at (10,1.5) {$5$}
 node[leaf] (p6) at (12,1.5) {$6$}
 node[leaf] (p7) at (14,1.5) {$7$}
 node[leaf] (p8) at (16,1.5) {$8$}
;
\path[->]
 node (q0) at (0,0) {\tt a}
 node (q1) at (2,0) {\tt b}
 node (q2) at (4,0) {\tt a}
 node (q3) at (6,0) {\tt c}
 node (q4) at (8,0) {\tt a}
 node (q5) at (10,0) {\tt b}
 node (q6) at (12,0) {\tt a}
 node (q7) at (14,0) {\tt d}
 node (q8) at (16,0) {\tt e}
;
\path[-]
 node[intern] (p9) at (1,3) {}   node (p9l) at (1,4) {\color{red} 0}
 node[intern] (p10) at (5,3) {}   node (p10l) at (5,4) {\color{red} 1}
 node[intern] (p11) at (3,4) {}  node (p11l) at (3,5) {\color{red} 2}
 node[intern] (p12) at (9,3) {}  node (p12l) at (9,4) {\color{red} 3}
 node[intern] (p13) at (13,3) {} node (p13l) at (13,4) {\color{red} 4}
 node[intern] (p14) at (11,4) {} node (p14l) at (11,5) {\color{red} 5}
 node[intern] (p15) at (7,5) {}  node (p15l) at (7,6) {\color{red} 6}
 node[intern] (p16) at (15,6) {} node (p16l) at (15,7) {\color{red} 7}

 (p9) edge node {} (p0)
 (p9) edge node {} (p1)
 (p10) edge node {} (p2)
 (p10) edge node {} (p3)
 (p11) edge node {} (p9)
 (p11) edge node {} (p10)
 (p12) edge node {} (p4)
 (p12) edge node {} (p5)
 (p13) edge node {} (p6)
 (p13) edge node {} (p7)
 (p14) edge node {} (p12)
 (p14) edge node {} (p13)
 (p15) edge node {} (p11)
 (p15) edge node {} (p14)
 (p16) edge node {} (p15)
 (p16) edge node {} (p8)
;
\end{tikzpicture}

\medskip%\noindent
\begin{tabular}{@{}l@{\quad}*{13}{p{9pt}}l@{}}
$j$		&0&1&2&3&4&5&6&7&8 \\ %&8&9&10&11 \\
\hline
$y[j]$	&\tt a&\tt b&\tt a&\tt c&\tt a&\tt b&\tt a&\tt d&\tt e\\
$\psp[j]$    &0&2&1&4&6&5&3&7\\
$\ran[j]$    &0&2&1&6&3&5&4&7\\
$\per[j]$    &1&2&2&4&4&4&4&8&9\\
\end{tabular}

\bigskip
Another way to retrieve periods of prefixes is to look at prefix ranks according to the infinite order. To do so, it amounts to look at ranks of proper Lyndon prefixes of $y$, because their periods are their lengths, starting with the first rank, $r$. Then the next Lyndon prefix is the shortest prefix having a rank greater than $r$, which is iterated until the end. This amounts to go up the left Lyndon tree from its leftest leaf to its root. In the example (above) positions on the rank table that correspond to the traversal are $j=0,1,3,7$.

%\medskip\noindent
%Example:
%
%\medskip%\noindent
%\begin{tabular}{@{}l@{\quad}*{13}{p{9pt}}l@{}}
%$j$		&0&1&2&3&4&5&6&7&8&9&10&11 \\
%\hline
%$\psp$    &0&2&1&4&6&5&8&10&9&7&3\\
%$\per[j]$      &-&2&2&4&4&4&4&4&4&4&4\\
%$y[j]$	&\tt a&\tt b&\tt a&\it u&\tt a&\tt b&\tt a&\it u&\tt a&\tt b&\tt a&\it v
%\end{tabular}
%
%\medskip\noindent
%In this example,  $\psp(y)[10]<\psp(y)[9]$, so $\per[10]=4$. This means we can go to position $\psp[10]-1$ and look for $j \leq 2$ such that  $\psp(y)[j]<\psp(y)[j-1]$.
%The answer is 2, it infers that $\per[2]=2$. Moving to position 1, proves that no such $j$ exists satisfying $j<1$ such that  $\psp(y)[j]<\psp(y)[j-1]$. Therefore, $y[0..1]=ab$. Building the string $y$ requires more information about characters $u$ and $v$; $u> \tt b$ and $v>u$.
%so one solution could be $y= \tt abacabacabad$.

Following the discussion, Algorithm \Algo{WordFromPsp} takes as input the PSP $p$ of a Lyndon word and builds an equivalent word, that is, a Lyndon word having the same PSP. The output is a word on the (constant) alphabet $\{\texttt{a},\texttt{b},\cdots\}$. If $y\in\mathcal{L}_2$, the output is $y$ itself. Else, the output is the smallest lexicographic Lyndon word having the same PSP.

After the inversion of $p$ to get the table $\ran$ (lines~\ref{algo-wfp-1}-\ref{algo-wfp-2}), the algorithm proceeds online on that table. It keeps information on the last highest rank met so far in variable $r$ and on the current period of the being-built word $y$ in variable $q$. Instruction at lines~\ref{algo-wfp-5}-\ref{algo-wfp-8} implements the bottom up description on the virtual left Lyndon tree of the future output.

\medskip\noindent
\begin{algo}{WordFromPsp}{p \mbox{ PSP of a Lyndon word of length } n}
  \DOFORI{j}{0}{n-1}        \label{algo-wfp-1}
    \SET{\ran[p[j]]}{j}   \label{algo-wfp-2}
  \OD
  \SET{(y,r,q)}{(\texttt{a},\ran[0],1)}
  \DOFORI{j}{0}{n-2}
    \IF{\ran[j] \leq r}       \label{algo-wfp-5}
      \SET{y}{y\cdot y[j-q]}
    \ELSE
%      \SET{b}{(\mbox{smaller letter larger than } y[j-q])}
      \SET{y}{y\cdot (\mbox{smallest letter larger than } y[j-q])}
      \SET{(r,q)}{(\ran[j],j+1)}    \label{algo-wfp-8}
    \FI
  \OD
  \SET{y}{y \cdot (\mbox{smallest letter larger than } y[n-1-q])}
  \RETURN{y}
\end{algo}

Applied to the example whose PSP is $(0,2,1,4,6,5,3,7)=\psp(\texttt{abacabade})$ the algorithm produces the Lyndon word $\texttt{abacabadb}$. Indeed, in the initial word, letter $\texttt{b}$ is necessarily greater than $\texttt{a}$, letter $\texttt{c}$ greater than $\texttt{b}$ and letter $\texttt{d}$ greater than $\texttt{c}$. But letter $\texttt{e}$ is only required to be greater than $\texttt{a}$.

\begin{proposition}
Given the PSP table $p$ of a Lyndon word, \Call{WordFromPsp}{p} is the lexicographic smallest Lyndon word $y\in\{\texttt{a},\texttt{b},\cdots\}$ for which $\psp(y)=p$. The computation is done in linear time.
\end{proposition}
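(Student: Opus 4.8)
The plan is to proceed in three steps: first, that $p$ alone determines the period table of every Lyndon word having $p$ as its PSP; second, that Algorithm \Algo{WordFromPsp} reconstructs from that data a word $y$ which is a Lyndon word with $\psp(y)=p$; third, that $y$ is the lexicographically smallest Lyndon word with PSP $p$. Throughout, write $\ran=p^{-1}$ for the table computed in lines~\ref{algo-wfp-1}--\ref{algo-wfp-2}, and let $z$ be a Lyndon word of length $n$ with $\psp(z)=p$, which exists by hypothesis. For the first step I would invoke the standard fact that the proper non-empty Lyndon prefixes of a Lyndon word form a chain for the prefix order and are exactly the internal nodes on the leftmost branch of its left Lyndon tree (an easy induction on the recursive definition of $\llynt$ using Proposition~\ref{prop-2}). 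By Theorem~\ref{theo-dolce}, $\llynt(z)$ is the Cartesian tree of $\ran$, and since $\ran$-labels increase from the leaves to the root, the leftmost branch of that tree consists exactly of the positions $j$ with $\ran[j]=\max\{\ran[0],\dots,\ran[j]\}$; call such a $j$, and also $j=0$, a \emph{record}. Hence $z[0\dd j]$ is a Lyndon word precisely when $j$ is a record, and by the Duval invariant (the one already underlying Algorithms \Algo{LyndonSuffixT} and \Algo{PeriodsFromPsp}) its period is then $j+1$, and otherwise equals $j'+1$ for the previous record $j'$. Thus the period table is a function of $p$ alone, and when Algorithm \Algo{WordFromPsp} is about to fix $y[j]$ its variables satisfy $r=\max\{\ran[0],\dots,\ran[j-1]\}$ and $q$ equals the period $p$ assigns to the prefix of length $j$, a new Lyndon prefix being signalled exactly by $\ran[j]>r$.

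For the second step I would argue by induction on $j$, keeping as invariant that the word built so far is a prefix of a Lyndon word, and in fact a prefix of $v^\infty$ where $v=y[0\dd q-1]$ is the current longest Lyndon prefix (itself a Lyndon word by the first step). If $\ran[j]\le r$, then the prefix of length $j+1$ must have period $q$, so $y[j]=y[j-q]$ is forced and $y[0\dd j]$ equals $v^\infty[0\dd j]$, which is again a prefix of a Lyndon word, since any prefix of $v^\infty$ for a Lyndon word $v$ extends, by a sufficiently large letter, to a Lyndon word. If $\ran[j]>r$, then $y[0\dd j-1]\cdot y[j-q]=v^\infty[0\dd j]$ is a prefix of a Lyndon word and the algorithm chooses $y[j]$ strictly larger than $y[j-q]$, so Proposition~\ref{prop-2}(i) makes $y[0\dd j]$ a Lyndon word and $q$ is reset to $j+1$. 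The final instruction handles position $n-1$ the same way, so the returned $y$ is a Lyndon word over $\{\texttt{a},\texttt{b},\dots\}$ whose Lyndon prefixes and periods are exactly those prescribed by $p$. To conclude $\psp(y)=p$, I would note that Algorithm \Algo{PrefixStandardPermutation} (equivalently Algorithm \Algo{LeftLyndonTree} with Theorem~\ref{theo-6}) branches only on the equality test $y[j]=y[i]$, whose outcome at each step is determined solely by the period table; since $y$ and $z$ share that table, the run of \Algo{PrefixStandardPermutation} on $y$ copies its run on $z$, whose output is $\psp(z)=p$.

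For the third step, let $y'$ be any Lyndon word with $\psp(y')=p$. By the first step $y'$ has the same period table and the same records as $y$, and I would prove $y[j]\le y'[j]$ for all $j$ by induction, the base case $j=0$ being clear since $y[0]=\texttt{a}$ is the least letter. For $j\ge 1$, at a non-record position the prefixes of length $j+1$ have period $q$, so $y[j]=y[j-q]\le y'[j-q]=y'[j]$; at a record position, and at the last position, the Lyndon word $y'[0\dd j]$ is border-free so $y'[j]\ne y'[j-q]$, while $y'[j]<y'[j-q]$ would force a proper suffix of $y'[0\dd j]$ to be $\strl$ one of its prefixes, contradicting that $y'[0\dd j]$ is Lyndon; hence $y'[j]>y'[j-q]\ge y[j-q]$, so $y'[j]$ is at least the successor of $y[j-q]$, which is exactly $y[j]$. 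Thus $y\le y'$, and as $y$ itself has PSP $p$ and is Lyndon it is the smallest such word; when $z$ is binary every record letter of $z$ already exceeds $\texttt{a}$, so no letter past $\texttt{b}$ is ever introduced and the output is $z$ itself. For the running time, inverting $p$ and the single pass of the main loop take $O(n)$ operations, each ``smallest letter larger than~$c$'' being the successor of $c$ in $\{\texttt{a},\texttt{b},\dots\}$ and hence $O(1)$ in the letter-comparison model, so the whole computation is linear.

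The step I expect to be the main obstacle is establishing $\psp(y)=p$ in the second part: one must be confident that fixing the period table pins down the entire execution of the PSP-construction algorithm, so that the letters freely chosen at record positions, which genuinely change the word, cannot perturb the resulting permutation.
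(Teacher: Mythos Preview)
The paper states this proposition without proof; it offers only the informal discussion preceding the algorithm (that Lyndon prefixes are detected by the left-to-right maxima of $\ran$, equivalently by climbing the leftmost branch of the Cartesian tree). Your proposal turns that sketch into a complete argument, and the three steps are sound.

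A few remarks on the points you flag as delicate. Your identification of the records of $\ran$ with the Lyndon prefixes is exactly what the paper is gesturing at: the leftmost branch of a max-Cartesian tree is the set of prefix maxima, and the leftmost branch of $\llynt(y)$ is, by the recursive definition via the longest proper Lyndon prefix, the chain of Lyndon prefixes. Combined with the Duval invariant (period either persists or jumps to $j{+}1$, never anything else for a Lyndon word), this pins down the period table from $p$ alone, as you say.

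The step you single out as the obstacle, $\psp(y)=p$, is in fact fully justified by your observation that Algorithm \Algo{PrefixStandardPermutation} branches only on the test $y[j]=y[i]$ and that, thanks to the maintained invariant $y[i]=y[j-\per]$, this test is equivalent to ``$\period[j]=\period[j-1]$''. Since all state ($\per$, $i$, $\lyns$, the \texttt{while} loop) is updated from previous state and the branch outcome alone, two Lyndon words with the same period table produce identical traces and hence identical outputs. So the ``free'' choices at record positions cannot perturb the permutation; you have closed this gap.

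In Step~3 your pointwise inequality $y[j]\le y'[j]$ is the right invariant; note that it is strictly stronger than the lexicographic conclusion but is what the induction needs, since the bound at a record uses $y[j-q]\le y'[j-q]$. The inequality $y'[j]>y'[j-q]$ at record and final positions follows, as you indicate, from the fact that $y'[0\dd j]$ is Lyndon while $y'[0\dd j-1]$ has period $q$: equality would create a border, and $y'[j]<y'[j-q]$ would make the suffix starting at the last full period strictly smaller than the word.

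In short, the paper omits a proof; yours is correct and is the natural completion of the paper's own discussion.
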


Note that when applied to the PSP of a half Zimin word the algorithm recovers the word itself up to an alphabetic translation. Recall that Zimin words $Z_i$ are defined by the relations: $Z_0$ is the empty word and, for $i>0$, $Z_i=Z_{i-1}\cdot a_i\cdot Z_{i-1}$, where $a_i$ is a letter not occurring in $Z_{i-1}$. Using the constant alphabet, first half Zimin words are $\epsilon$, $\texttt{a}$, $\texttt{ab}$, $\texttt{abac}$, $\texttt{abacabad}$ and $\texttt{abacabadabacabae}$.

Half Zimin words contain the largest alphabet amongst the class of solution
words of length $n$ constructed by Algorithm \Algo{WordFromPsp}.
They contain $\lfloor\log (n+1)\rfloor +1$ distinct letters.

%\begin{verbatim}
%def RankfromPsp(psp):
%    n=len(psp)
%    rank = [0]*(n)
%    for j in range(0,n):
%        rank[psp[j]] = j
%    return rank
%
%def WordfromRank(rank):
%    alpha = ["a","b","c","d","e"]
%    n=len(rank)+1
%    r,p = rank[0],1
%    w = "a"
%    for j in range(1,n-1):
%        i = alpha.index(w[j-p])
%        if rank[j]<=r:
%            w = w+alpha[i]
%        else:
%            w = w+alpha[i+1]
%            r,p = rank[j],j+1
%    i = alpha.index(w[n-1-p])
%    w = w+alpha[i+1]
%    return w
%\end{verbatim}

%---------%---------%---------%---------%---------%---------%---------%--------%
\section{Lyndon forest}\label{sect:forest}

Methods of previous sections that concern Lyndon words easily extend to all (non-empty) words. Trees become forests due to the Lyndon factorisation of words. A forest is reduced to a single tree when the considered word is a Lyndon word.

The Lyndon factorisation of a non-empty word $y$ is a decreasing list of Lyndon factors of the word. It is a list $x_1, x_2, \dots, x_k$ for which both $x_1x_2\cdots x_k=y$ and $x_1\geq x_2\geq\cdots \geq x_k$ hold. This factorisation is unique (see \cite[Theorem 5.1.5]{Lothaire83}) and
the left Lyndon forest of word $y$ is the list of left Lyndon trees $\llynt(x_1)$, $\llynt(x_2)$, \dots, $\llynt(x_k)$.

The factorisation and its algorithm by Duval \cite{Duval83} is the guiding thread of previous algorithms. Following the techniques in Section~\ref{sect:llyn} the computation of Lyndon forest also uses the Lyndon suffix table of the word.
Algorithm \Algo{LyndonSuffixTable} deals with words that are not necessarily Lyndon words, and it can be viewed as an extension of Algorithm \Algo{LyndonSuffixT}.
%It is computed by Algorithm \Algo{LyndonSuffixTable}, extension of Algorithm \Algo{LyndonSuffixT} to deal with words that are not necessarily Lyndon words.

Computing the forest from the table can then be carried out as in Section~\ref{sect:llyn}, therefore we only describe the table computation below.

\medskip\noindent
{\begin{picture}(300,37)(0,0)
\put(  0,18){$y$}
\put( 10,15){\framebox(300,10){}}
%\put( 50,15){\framebox( 60,10){$x$}}
\put(100,15){\framebox( 70,10){$x$}}
\put(170,15){\framebox( 70,10){$x$}}
\put(240,15){\framebox(20,10){$z$}}
\put(100,29){\makebox(7,7){$h$}}
\put(120,29){\makebox(10,7){$i$}}
\put(260,28){\makebox(10,7){$j$}}
\put(100, 0){\framebox(20,10){$z$}}
\put(170,11){\vector(1,0){70}}
\put(240,11){\vector(-1,0){70}}
\put(170, 0){\makebox( 60,7){$\period(y[h\dd j-1])$}}
\put(100,10){\dashbox{1}(0,5)}
\put(120,10){\dashbox{1}(0,15)}
%\put(150,11){\vector(1,0){50}}
%\put(200,11){\vector(-1,0){50}}
%\put(150, 0){\makebox(50,7){$Lyn[j]$}}
\end{picture}}

\medskip\noindent
\begin{algo}{LyndonSuffixTable}{y \textrm{ non-empty word of length } n}
  \SET{\lyns[0]}{1}
  \SET{(\per,h,i,j)}{(1,0,0,1)}
  \DOWHILE{j < n}
    \IF{y[j]<y[i]}                \label{alg5-line-4}
      \SET{h}{j-(i-h)}
      \SET{\lyns[h]}{1}
      \SET{(\per,i,j)}{(1,h,h+1)} \label{alg5-line-7}
    \ELSEIF{y[j] > y[i]}
      \SET{\lyns[j]}{j-h+1}
      \SET{j}{j+1}                \label{alg5-line-10}
      \SET{(\per,i)}{(j-h,h)}
    \ELSE
      \SET{\lyns[j]}{\lyns[i]}
      \SET{(i,j)}{(h+(i-h+1 \bmod \per),j+1)} \label{alg5-line-13}
    \FI
  \OD
  \RETURN{\lyns}
\end{algo}

The update of Algorithm \Algo{LyndonSuffixT} to get Algorithm \Algo{LyndonSuffixTable} essentially lies in instructions on lines~\ref{alg5-line-4}-\ref{alg5-line-7} in the latter algorithm above. They reset the computation to the suffix $y[h\dd n-1]$ of the input after the factorisation of the prefix $y[0\dd h-1]$ is definitely achieved. Variable $h$ becomes the starting position of the next Lyndon factor of $y$.

\begin{proposition}\label{prop-5}
Algorithm \Algo{LyndonSuffixTable} computes the Lyndon suffix table of a word of length $n>0$ in time $O(n)$ in the letter-comparison model.
\end{proposition}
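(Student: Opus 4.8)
The approach is to show two things: (a) Algorithm \Algo{LyndonSuffixTable} correctly fills the table $\lyns$, and (b) it runs in time $O(n)$. For correctness I would argue by structural induction on the Lyndon factorisation $y=x_1x_2\cdots x_k$ of the input. The key invariant is the one already used for Algorithm \Algo{LyndonSuffixT}: at the top of the \While\ loop, $h$ is the starting position of the current (still open) Lyndon factor, $y[h\dd j-1]$ has period $\per$, $x=y[h\dd h+\per-1]$ is a Lyndon word, $z=y[h+\lfloor (i-h)/\per\rfloor\per \dd j-1]$ is the (possibly empty) proper prefix of $x$ that remains after the last full copy of $x$, and $i=h+((j-h)\bmod\per)$ points at the letter of $x$ against which $y[j]$ must be compared (so $y[i]=y[j-\per]$). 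First I would check that these invariants are established by the initialisation and preserved by each of the three branches, reusing verbatim the analysis of Algorithm \Algo{LyndonSuffixT} for the two branches $y[j]>y[i]$ and $y[j]=y[i]$ (they are unchanged except for the bookkeeping of $h$), and appealing to Proposition~\ref{prop-2}(i) for the claim that when $y[j]>y[i]$ the prefix $y[h\dd j]$ is a Lyndon word of period $j-h+1$, hence $\lyns[j]=j-h+1$.

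The genuinely new branch is lines~\ref{alg5-line-4}-\ref{alg5-line-7}, taken when $y[j]<y[i]$. Here I would argue that, by Duval's characterisation recalled before Algorithm \Algo{LeftLyndonTree}, the prefix $y[h\dd h+\lfloor(i-h)/\per\rfloor\per -1]$ ending just before the last incomplete copy of $x$ is exactly a Lyndon factor $x_1'$ of the remaining suffix (a power of the Lyndon word $x$ is written as a concatenation of copies of $x$, each an $x_m$ in the factorisation), and that all of $x_1, \dots, x_{m}$ up to that point have been finalised with correct $\lyns$ values during earlier iterations, since their entries were written by the other two branches and never revisited. The letters of the partial copy $z$ of $x$ that were tentatively inside the open factor must now be re-examined as the beginning of the next Lyndon factor; this is precisely what the reset $h\leftarrow j-(i-h)$, $(\per,i,j)\leftarrow(1,h,h+1)$, $\lyns[h]\leftarrow1$ does — it restarts the scan at the first letter of $z$. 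A small lemma I would state separately: when the scan is restarted at $h$, every position $<h$ has already received its final correct $\lyns$ value and no position $\geq h$ has been written yet, so no entry is ever overwritten incorrectly. Combined with the induction hypothesis applied to the suffix $y[h\dd n-1]$ (a strictly shorter problem in the number of remaining letters, or more precisely a word whose Lyndon factorisation is $x_{m+1}\cdots x_k$), this gives full correctness.

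For the time bound I would use an amortised/potential argument on the pair of pointers $(h,j)$: $j$ only ever increases, from $1$ to $n$; $h$ only ever increases; and each iteration either increments $j$ (the two branches $y[j]\ne y[i]$, done in constant time) or leaves $j$ fixed but strictly increases $h$ (the branch $y[j]<y[i]$, because $i>h$ there so the new $h=j-(i-h)<j$ but is still larger than the old $h$). Hence the number of iterations is at most $n + (\text{number of resets}) \le 2n$, each costing $O(1)$ in the letter-comparison model, giving the claimed $O(n)$. The main obstacle is the correctness of the reset branch — pinning down exactly which portion of the scanned prefix is already committed as Lyndon factors and which portion ($z$) must be re-scanned, and verifying that the arithmetic $h\leftarrow j-(i-h)$ lands on the first letter of $z$ with the period and pointer correctly reinitialised. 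All of this is essentially the tree-free shadow of the correctness of Duval's factorisation algorithm \cite{Duval83}, so the cleanest write-up reduces it to that known result rather than re-deriving the periodicity facts from scratch.
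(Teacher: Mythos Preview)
Your time-bound argument contains a genuine error. You claim that in the branch $y[j]<y[i]$ the variable $j$ ``stays fixed'' while $h$ strictly increases, and hence that $j$ is monotone. But line~\ref{alg5-line-7} resets $j\leftarrow h+1$ with the \emph{new} $h=j-(i-h)$, so the new value of $j$ is $j-(i-h)+1$; whenever $i>h+1$ this is strictly smaller than the old $j$. (Your side remark ``because $i>h$ there'' is also not guaranteed: right after a ``$>$'' step, or at the very start, one has $i=h$.) Consequently the count ``at most $n$ increments of $j$ plus at most $n$ resets'' does not follow from your premises.

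The paper repairs exactly this point by using $h+j$ as the potential rather than $j$ alone: in the reset branch $h$ increases by $j-i$, which is always a positive multiple of $\per$ and hence at least $\per$, while $j$ decreases by $i-h-1<\per$; so $h+j$ strictly increases in every iteration and runs from $1$ to at most $2n-1$. Your argument is salvageable along the same lines, but the monotonicity claim as stated is false and must be replaced.

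A smaller inaccuracy in your correctness sketch: it is not true that ``no position $\geq h$ has been written yet'' when the scan restarts. The positions of the incomplete copy $z$, namely the new $h,\dots,j-1$, \emph{were} assigned $\lyns$ values during the preceding ``$=$'' steps, and the restart will overwrite them. What you need to argue instead is that those tentative values were already correct (they depend only on the periodic structure of $y[h\dd j-1]$, not on the choice of $h$), or simply that the final write after the restart is correct; either way the lemma as you phrased it is false. The paper, incidentally, proves only the running time here and leaves correctness implicit via the reduction to Duval's factorisation and the earlier Proposition~\ref{prop-3}.
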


\begin{proof}
Let us consider the values of expression $h+j$ and show they strictly increase after each iteration of the \While\ loop. The claim holds if the condition at line~\ref{alg5-line-4} is false, because $j$ is incremented by at least one unit (on line~\ref{alg5-line-10} or on line~\ref{alg5-line-13}) and $h$ remains unchanged. The claim also holds if the condition at line~\ref{alg5-line-4} is true, because $h$ is incremented by at least $\period(y[h\dd j-1])$ while $j$ is decremented by less than the same value.

Thus, since $h+j$ goes from $1$ to at most $2n-1$ twinned with the fact that instruction at lines~\ref{alg5-line-4}-\ref{alg5-line-13} executes in constant time, the running time is $O(n)$.
\end{proof}

Note that the Lyndon factorisation of a word $y$ can be retrieved from its $\lyns$ table by sequentially tracing back from $|y|$ starting positions of previous factors. The list of starting positions of factors, in reverse order, is $i_k=|y|-\lyns[|y|-1]$, $i_{k-1}=i_k-\lyns[i_{k-1}-1], \dots, 0$.

The Lyndon suffix table of $y = \texttt{babbababbaabb}$ is as follows:

\medskip\noindent
\begin{tabular}{@{}l@{\quad}*{13}{p{9pt}}l@{}}
$j$		&0&1&2&3&4&5&6&7&8&9&10&11&12 \\
\hline
$y[j]$	&\tt b&\tt a&\tt b&\tt b&\tt a&\tt b&\tt a&\tt b&\tt b&\tt a&\tt a&\tt b&\tt b \\
$\lyns[j]$	&1&1&2&3&1&2&1&2&5&1&1&3&4
\end{tabular}

\medskip\noindent
Starting positions of factors of its Lyndon factorisation are $9=13-\lyns[12]$, $4=9-\lyns[8]$, $1=4-\lyns[3]$, $0=1-\lyns[0]$.
The bellow figure depicts the Lyndon forest of this example.

\medskip%\noindent
%\tikzstyle{intern} = [circle, draw, thin]
%\tikzstyle{leaf} = [draw, thin]
\begin{tikzpicture}[scale=0.37,node distance=1.5cm, auto,>=latex', thick]
%\draw[step=1cm,very thin](-1,-1) grid (27,9);
\path[->]
 node[leaf] (p0) at (0,1.5) {$0$}
 node[leaf] (p1) at (2,1.5) {$1$}
 node[leaf] (p2) at (4,1.5) {$2$}
 node[leaf] (p3) at (6,1.5) {$3$}
 node[leaf] (p4) at (8,1.5) {$4$}
 node[leaf] (p5) at (10,1.5) {$5$}
 node[leaf] (p6) at (12,1.5) {$6$}
 node[leaf] (p7) at (14,1.5) {$7$}
 node[leaf] (p8) at (16,1.5) {$8$}
 node[leaf] (p9) at (18,1.5) {$9$}
 node[leaf] (p10) at (20,1.5) {$10$}
 node[leaf] (p11) at (22,1.5) {$11$}
 node[leaf] (p12) at (24,1.5) {$12$}
;
\path[->]
 node (q0) at (0,0) {\tt b}
 node (q1) at (2,0) {\tt a}
 node (q2) at (4,0) {\tt b}
 node (q3) at (6,0) {\tt b}
 node (q4) at (8,0) {\tt a}
 node (q5) at (10,0) {\tt b}
 node (q6) at (12,0) {\tt a}
 node (q7) at (14,0) {\tt b}
 node (q8) at (16,0) {\tt b}
 node (q9) at (18,0) {\tt a}
 node (q10) at (20,0) {\tt a}
 node (q11) at (22,0) {\tt b}
 node (q12) at (24,0) {\tt b}
;
\path[-]
 node[intern] (p14) at (3,3) {}
 node[intern] (p15) at (5,4) {}
 node[intern] (p16) at (9,3) {}
 node[intern] (p17) at (13,3) {}
 node[intern] (p18) at (15,4) {}
 node[intern] (p19) at (11,5) {}
 node[intern] (p20) at (21,3) {}
 node[intern] (p21) at (19,4) {}
 node[intern] (p22) at (23,5) {}
 (p14) edge node {} (p1)
 (p14) edge node {} (p2)
 (p15) edge node {} (p14)
 (p15) edge node {} (p3)
 (p16) edge node {} (p4)
 (p16) edge node {} (p5)
 (p17) edge node {} (p6)
 (p17) edge node {} (p7)
 (p18) edge node {} (p17)
 (p18) edge node {} (p8)
 (p19) edge node {} (p16)
 (p19) edge node {} (p18)
 (p20) edge node {} (p10)
 (p20) edge node {} (p11)
 (p21) edge node {} (p9)
 (p21) edge node {} (p20)
 (p22) edge node {} (p21)
 (p22) edge node {} (p12)
;
\end{tikzpicture}

\medskip
Algorithm \Algo{LeftLyndonForest} is merely adapted from the previous algorithm in order to manage Lyndon tree constructions of each factor of the Lyndon factorisation while computing the latter. The next proposition is a direct consequence of Proposition~\ref{prop-5}.

\begin{proposition}\label{prop-6}
Algorithm \Algo{LeftLyndonForest} computes the Lyndon forest of a word of length $n>0$ in time $O(n)$ in the letter-comparison model.
\end{proposition}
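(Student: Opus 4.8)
The plan is to leverage the already-established linear-time bound for Algorithm \Algo{LyndonSuffixTable} (Proposition~\ref{prop-5}) together with the amortisation argument used in the proof of Proposition~\ref{prop-4}. Since Algorithm \Algo{LeftLyndonForest} is obtained from Algorithm \Algo{LyndonSuffixTable} by interleaving, at each position where a Lyndon suffix of length greater than one is detected, a tree-bundling loop that is identical in spirit to lines~\ref{alg3-line10}--\ref{alg3-line15} of Algorithm \Algo{LeftLyndonTree}, the correctness of the forest construction follows from the correctness of Algorithm \Algo{LeftLyndonTree} applied separately to each Lyndon factor $x_1, x_2, \dots, x_k$ of the Lyndon factorisation of $y$; the only new point is that, once a factor $x_\ell$ is definitely closed (when the condition at line~\ref{alg5-line-4} fires and $h$ advances), its subtree is finalised and never touched again, so the roots accumulated at that moment are exactly the roots of $\llynt(x_1), \dots, \llynt(x_\ell)$.

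For the running time, first I would observe that, discounting the bundling \While\ loop, every iteration of the outer loop of Algorithm \Algo{LeftLyndonForest} runs in constant time and, by the proof of Proposition~\ref{prop-5}, the quantity $h+j$ strictly increases from $1$ to at most $2n-1$, so there are $O(n)$ such iterations. It then remains to bound the total cost of all executions of the bundling loop across the whole run. As in Proposition~\ref{prop-4}, each single execution of the body of that loop runs in constant time and creates one internal node of the left Lyndon forest; since the forest has exactly $n-k$ internal nodes (one fewer than the number of leaves in each of the $k$ trees, summed), where $k$ is the number of Lyndon factors, the aggregate cost of the bundling is $O(n-k) = O(n)$. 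Summing the two contributions gives the claimed $O(n)$ bound in the letter-comparison model.

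The main obstacle I anticipate is making the amortisation watertight in the presence of the factor-reset step: when the condition at line~\ref{alg5-line-4} is true, position $j$ moves \emph{backwards} and a new factor begins, so one must check that no internal node already created for a closed factor is ever revisited, recreated, or re-bundled, i.e. that the "new node" counter is genuinely monotone and that the sum of internal nodes over the factors telescopes to $n-k$. This is intuitively clear because a closed factor corresponds to a completed prefix $y[0\dd h-1]$ whose tree roots are no longer in scope, but a careful statement would invoke the uniqueness of the Lyndon factorisation (cited as \cite[Theorem 5.1.5]{Lothaire83}) and the fact that the $\lyns$ values computed for positions inside $y[h\dd n-1]$ depend only on that suffix. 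Everything else is routine: the per-iteration constant-time claim is immediate from inspection of the pseudocode, and the $h+j$ potential argument is exactly the one already given for Proposition~\ref{prop-5}.
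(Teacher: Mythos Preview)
Your proposal is correct and matches the paper's approach. The paper gives no explicit proof for Proposition~\ref{prop-6}, stating only that it ``is a direct consequence of Proposition~\ref{prop-5}''; your write-up spells out exactly the two ingredients that make this work, namely the $h+j$ potential argument from Proposition~\ref{prop-5} for the outer loop and the node-creation amortisation from Proposition~\ref{prop-4} (with the refined count of $n-k$ internal nodes) for the bundling loop.
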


\medskip\noindent
\begin{algo}{LeftLyndonForest}{y \textrm{ non-empty word of length } n}
  \SET{(\lyns[0],\roo[0])}{(1,0)}
  \SET{(\per,h,i,j)}{(1,0,0,1)}
  \DOWHILE{j < n}
    \SET{\roo[j]}{j}
    \IF{y[j]<y[i]}
      \SET{h}{j-(i-h)}
      \SET{\lyns[h]}{1}
      \SET{(\per,i,j)}{(1,h,h+1)}
    \ELSEIF{y[j] > y[i]}
      \SET{\lyns[j]}{j-h+1}
      \SET{j}{j+1}
      \SET{(\per,i)}{(j-h,h)}
    \ELSE
      \SET{\lyns[j]}{\lyns[i]}
      \SET{(i,j)}{(h+(i-h+1 \bmod \per),j+1)}
    \FI
    \COM{Bundle}
    \SET{(p,m,k)}{(\roo[j],1,j-1)}
    \DOWHILE{m < \lyns[j]}
      \SET{q}{\mbox{new node} \geq n}
      \SET{(\leftchild[q],\rightchild[q])}{(\roo[k],p)}
      \SET{(p,m)}{(q,m+\lyns[k])}
      \SET{k}{k-\lyns[k]}
    \OD
  \OD
  \RETURN{\roo[n-1]}
\end{algo}

%---------%---------%---------%---------%---------%---------%---------%--------%
\section{Conclusions}
In this paper, Algorithm \Algo{LyndonSuffixTable} computes the Lyndon suffix table of a word. The table is an essential part of algorithm  \Algo{LeftLyndonTree} that constructs the left Lyndon tree of a Lyndon word in linear time.

We further investigated the prefix standard permutation of a Lyndon word, initially introduced by Dolce et al. \cite{DolceRR19b}, and its relation to the left Lyndon tree. This study resulted in a linear-time algorithm for sorting the prefixes of a Lyndon word according to infinite ordering. In addition, we showed how to recover a word from a given permutation assumed to be a prefix standard permutation.

To achieve the results, we exhibited a strong connection between the prefix ranks and the left Lyndon tree. This connection dictates that the order in which the internal nodes of the left Lyndon tree are created and processed coincides with that of the prefix ranks according to infinite ordering and corresponds to the left-to-right postorder traversal of the tree.

We finally endeavoured to design a linear-time algorithm, \Algo{LeftLyndonForest}, that computes the Lyndon forest of an ordinary word.

%This process entailed modifications of algorithm \Algo{LyndonSuffix} to create algorithm \Algo{LongestLyndonSuffix}, which enables us to construct the Lyndon suffix table of also non-Lyndon words.
%A by-product of algorithm \Algo{LongestLyndonSuffix} is a linear-time computation of %the Lyndon factorisation similar to the technique in \cite{Duval83}.

%We finally endeavoured to design a linear-time algorithm that computes the Lyndon factorisation of a given word. This process entailed modification of algorithm \Algo{LyndonSuffix} to create algorithm \Algo{LongestLyndonSuffix}, which enables to construct Lyndon suffix table of also non-Lyndon words. We then exploited this table to design a linear-time algorithm \Algo{LyndonForest} which computes the Lyndon forest of a given word.

Many interesting questions remain, among them are: Is there a connection between runs and the internal nodes of the left Lyndon forest? Is there a tight relation between the left Lyndon trees and the right Lyndon trees?

%---------%---------%---------%---------%---------%---------%---------%--------%
\bibliographystyle{abbrv}
\bibliography{LeftLyndonTree}
\end{document}